    \newcolumntype{Y}{>{\centering\arraybackslash}X}
    \newcolumntype{Z}[1]{>{\centering\arraybackslash}p{#1\textwidth}}
\pgfplotsset{compat=1.14}
\g@addto@macro{\UrlBreaks}{\UrlOrds}
\def\url@leostyle{%
  \@ifundefined{selectfont}{\def\UrlFont{\sf}}{\def\UrlFont{\small\ttfamily}}}
\theoremstyle{plain}
\newtheorem{theorem}{Theorem}
\newaliascnt{lemma}{theorem}
\newtheorem{lemma}[theorem]{Lemma}
\newtheorem{corollary}[theorem]{Corollary}
\newtheorem{definition}[theorem]{Definition}
\theoremstyle{definition}
\theoremstyle{remark}
\newtheorem*{note*}{Note}
\newtheorem*{remark*}{Remark}
\title{A Strategic Routing Framework and\\
  Algorithms for Computing Alternative Paths}
\author{%
    Thomas Bläsius\footnote{Hasso Plattner Institute, University of Potsdam,
    Germany, \texttt{firstname.lastname@hpi.de}}
    \and
    Maximilian Böther\footnote{Hasso Plattner Institute, University of Potsdam,
    Germany, \texttt{firstname.lastname@student.hpi.de}}
    \and
    Philipp Fischbeck\footnotemark[1]
    \and
    Tobias Friedrich\footnotemark[1]
    \and
    Alina Gries\footnotemark[2]
    \and
    Falk Hüffner\footnote{TomTom Location Technology Germany GmbH, \texttt{falk.hueffner@tomtom.com}}
    \and
    Otto Kißig\footnotemark[2]
    \and
    Pascal Lenzner\footnotemark[1]
    \and
    Louise Molitor\footnotemark[1]
    \and
    Leon Schiller\footnotemark[2]
    \and
    Armin Wells\footnotemark[2]
    \and
    Simon Wietheger\footnotemark[2]
}
\date{}
\newcommand{\set}[1]{\{#1\}}
    \def\\{}
\newcommand{\Tcost}{\mathcal{C}}
\definecolor{lightgreen}{rgb}{0.75,0.92,0.61}
\newcommand{\labelname}[1]{
  \def\@currentlabelname{#1}}
\newif\ifarxivVersion
\newlength{\punctuationfootlength}
\newcommand{\punctuationfootnote}[2]{#2\settowidth{\punctuationfootlength}%
{#2}\hspace{-0.5\punctuationfootlength}\footnote{#1}}
\definecolor[named]{Greenery}{rgb}{0.533, 0.69, 0.294}
\definecolor{brightlavender}{rgb}{0.75, 0.58, 0.89}
\begin{document}
\maketitle
\begin{abstract}
\noindent Traditional navigation services find the fastest route for a single driver.
Though always using the fastest route seems desirable for every individual,
selfish behavior can have undesirable effects such as higher energy consumption
and avoidable congestion, even leading to higher overall and individual travel
times. In contrast, strategic routing aims at optimizing the traffic for all
agents regarding a global optimization goal.  We introduce a framework to
formalize real-world strategic routing scenarios as algorithmic problems and
study one of them, which we call \emph{Single Alternative Path (SAP)}, in
detail.  There, we are given an original route between a single
origin--destination pair. The goal is to suggest an alternative route to all
agents that optimizes the overall travel time under the assumption that the
agents distribute among both routes according to a psychological model, for
which we introduce the concept of Pareto-conformity.  We show that the SAP
problem is NP-complete, even for such models. Nonetheless, assuming
Pareto-conformity, we give multiple algorithms for different variants of SAP,
using multi-criteria shortest path algorithms as subroutines.  Moreover, we
prove that several natural models are in fact Pareto-conform.  The
implementation of our algorithms serves as a proof of concept, showing that SAP
can be solved in reasonable time even though the algorithms have exponential
running time in the worst case.


\end{abstract}

\section{Introduction}\label{sec:introduction}

Commuting is part of our daily lives. Street congestion, traffic jams
and pollution became an increasingly large issue in the last few
decades. In German cities, these effects caused costs of about 3
billion euros in 2019~\cite{Inrix2020}.
Many traffic jams in cities could have been avoided by better route
choice. Partly this is because of non-optimal route choices by
individuals due to bounded rationality and route preferences other
than ``fastest''~\cite{zhu_levinson_2015}. However, even with
individually optimal route choice, average travel time can be
substantially worse compared to a system optimum where all routes are
centrally assigned~\cite{roughgarden_2005}.
%
Thus, there is an opportunity for improving traffic via
\emph{strategic routing} where (re)routing recommendations are created
by traffic authorities and taken into account by the driver's routing
system. More precisely, we speak of strategic routing when two
conditions are met:
\begin{enumerate}[(i)]
\item One or more routes are calculated to be proposed to \emph{more than
    one agent}, and
\item the quality of a set of proposed routes is being defined by a \emph{shared
    scoring} rather than scoring each agent individually.
\end{enumerate}

\noindent Recent research indicates that many drivers would accept
individually slower routes if this contributes to an overall reduction
in traffic~\cite{vanEssen2018, kroller_driver_2020}; additionally,
incentives such as free parking could be granted to those accepting
these routes, and future autonomous vehicles may be more amenable to
centralized control. Thus, (re)routing recommendations can have a strong
impact since they might be followed by a significant fraction of all drivers.

In the ongoing pilot research project
\emph{Socrates~2.0}, strategic routing is employed in the area of
Amsterdam~\cite{Socrates2Project}. For this, experts predefine
alternative routes and traffic conditions that trigger their
recommendation. This requires extensive work and monitoring, and does not capture well unusual traffic situations where
there might be several incidents at once causing delays. Thus, it is
desirable to automate this by formalizing strategic routing and
finding algorithms that calculate strategic routes.

\subparagraph*{Our Contribution.}

Strategic routing as defined above is not a single algorithmic problem
but rather a concept capturing numerous scenarios leading to different
problems.  In \Cref{sec:dimensions}, we provide a framework to guide
the process of formalizing real-world strategic routing scenarios.  We
apply it to one specific scenario, namely \emph{Single Alternative
  Path (SAP)}.  This scenario is inspired by the Amsterdam use case
mentioned above where congestion can be prevented by suggesting one
alternative route to all agents, e.g., via a variable-message sign.  We consider
different psychological models to determine how many agents follow the
suggestion.  Moreover, we consider variants of the SAP problem that
require the alternative to be more or less disjoint from the original
route. See~\Cref{sec:probl-form} for a formal definition.

To tackle SAP algorithmically, we introduce the concept of Pareto-conformity of
psychological models and, based on this, give various algorithms
in~\Cref{sec:sap-algos}.  As they use multi-criteria shortest path algorithms as
subroutine, they have an exponential worst-case running time but turn out to be
sufficiently efficient in practice; see our evaluation in~\Cref{sec:evaluation}.
Moreover, in this generality, we cannot hope for better worst-case bounds as SAP
is NP-hard, even for Pareto-conform psychological models; see \Cref{sec:np}.  In
\Cref{sec:psychmods}, we prove the Pareto-conformity of three natural
psychological models.  Our proofs actually hold for the more general and
abstract \emph{Quotient Model} that captures various additional models.  We
evaluate our algorithms in~\Cref{sec:evaluation}.  It serves as a proof of
concept that our algorithms have reasonable practical run times and yield
promising travel time improvements for instances in the traffic network of
Berlin.

\subparagraph*{Related Work.}\label{subsec:related-work}
There has been no unique understanding of strategic routing in research until
this point. Van Essen~\cite{vanEssen2018} uses a
choice-theoretical approach and concludes that individual route choice and travel information that stimulates non-selfish user behavior have a large impact on the
network efficiency. Kröller et
al.~\cite{kroller_driver_2020} investigate due to what kind of
incentives agents would deviate from the shortest-path route. Their
results show that certain incentives can increase the drivers' willingness of
taking detours. Moreover, they show that there is a high interest in services
providing alternative routes, and strategic routing is considered to have
the potential of solving traffic issues such as congestion and pollution.

For standard algorithmic techniques in efficient route planning, we refer to the
survey of Bast et al.~\cite{bast_route_2016}.
Köhler et al.~\cite{Koehler_Moehring_Skutella_2009} deal with finding static and also time-dependent
traffic flows minimizing the overall travel time. Also, as stated by Strasser~\cite{Strasser17}, routing with predicted congestion is well-studied, e.g., by
Delling and Wagner~\cite{Delling2009}, Demiryurek et al.~\cite{Demiryurek2010}, Delling~\cite{Delling2009_2} and Nannicini et al.~\cite{Nannicini2011}. Route planning with
alternative routes was investigated by Abraham et al.~\cite{Abraham2013} and Paraskevopoulos and
Zaroliagis~\cite{Paraskevopoulos2013}. They propose
algorithms that find alternative routes by evaluating properties with regard to
an original route.

Lastly, we emphasize that strategic routing is very different from selfish routing as proposed by Roughgarden and Tardos~\cite{Roughgarden_Tardos_2002}. In contrast to our global optimization approach, in selfish routing individual strategic agents select their routes to optimize their own travel times, given the route choices of other agents. While often static flows are considered in selfish routing, Sering and Skutella~\cite{Sering18} analyzed selfish driver behavior for a dynamic flow-over-time model. Another related selfish routing variant is Stackelberg routing~\cite{Stackelberg_Korilis, Stackelberg_Bonifaci, Stackelberg_Karakostas, Stackelberg_Bhaskar}, where an altruistic central authority controls a fraction of the traffic and first routes it in a way to improve the travel times for all other selfish agents which choose their route afterwards.

\section{A Framework for Strategic Routing}
\label{sec:dimensions}

In the following, we provide a framework that supports the
formalization of a given strategic routing scenario.  We employ a
two-step process.  The first step categorizes the scenario by
distilling its crucial aspects.  The second step transforms it into an
algorithmic problem.

\subsection{Categorization}
\label{sec:categorization}

Categorizing a scenario at hand boils down to answering the
following questions.
%

\subparagraph{What is the goal we aim to achieve?}  There are
different objectives one can pursue when routing strategically. A city
might be interested in reducing particulate matter emission in a
certain region. As a routing service provider, the goal could be to
minimize the travel time for as many customers as possible. A system
of centrally controlled autonomous vehicles might want to achieve a
minimum overall travel time.
%

\subparagraph{How can we influence the agents?}  How we recommend
routes determines which agents we can influence and whether we can
make different suggestions to different agents.
A city administration can put up signs to
influence all vehicles in a certain area, making the same suggestion
to each agent.  Navigation providers, on the other hand, can influence
only a limited number of vehicles but could make different suggestions
to different agents.
%

\subparagraph{How much control do we have over the agents?} The
willingness of users to follow an alternative route depends on the use
case. While a navigation provider cannot force its users to use a
specific route, and the acceptance of detouring depends heavily on the
additional length, there are scenarios where the suggested route will
always be accepted or agents end up in an equilibrium or in a
system-optimal distribution on the suggested routes.
%

\subparagraph{What is the starting situation?}  We either assume that
there is already existing traffic, or that we design traffic from
scratch. Although the former is certainly more common, the latter
applies to, e.g., the scenario of centrally controlled autonomous
vehicles.
%

\subparagraph{How do the uninfluenced agents react?} If only a
fraction of the traffic is routed strategically, the remaining traffic
might react with respect to the change. For instance, it is a valid
assumption that after some time, all traffic settles in an
equilibrium. Another simple assumption is that the other traffic does
not change at all.

\subsection{Problem Formalization}
\label{sec:probl-form}



In this section, we first propose a generic formalization whose degrees of freedom can then be filled to reflect a specific scenario. We focus on the \textit{Single Alternative Path} (SAP) scenario, which we study algorithmically in~\Cref{sec:sap-algos}. We use it as an example how fixing answers to the questions raised in~\Cref{sec:categorization} naturally fills the degrees of freedom.

\subparagraph{Generic Strategic Routing Considerations.}\label{subsec:problem-statement}
Let \(G = (V, E)\) be a directed graph. For every
pair of nodes \((s, t) \in V^2\), the \emph{demand} \(d: V^2 \rightarrow
\mathbb{Q}\) denotes the amount of traffic flow that has to be routed from \(s\)
to \(t\). For every edge \(e \in E\), let \(\tau_e \colon \mathbb{Q}_{\ge 0}
\rightarrow \mathbb{Q}_{> 0}\) be a monotonically increasing cost function.
For \(x \in \mathbb{Q}_{\ge 0}\), \(\tau_e(x)\) describes the costs for a single agent
traversing an edge \(e \in E\) while there is a traffic flow of~\(x\) vehicles
per unit of time on \(e\).

The solution to a strategic routing problem is a traffic distribution to paths in the network that routes agents according to \(d\).
Let
\(\mathcal{P}\) be the set of all simple paths in \(G\). 
By \(f \colon \mathcal{P} \rightarrow
\mathbb{Q}_{\geq 0}\) we denote the flow, where \(f(P)\) states the amount of traffic flow using path \(P\).
Extending the notion, let \(f(e) = \sum_{e \in P} f(P)\) be the total traffic flow on an edge \(e\).
For all \(x\in\mathbb{Q}_{\ge 0}\), let
\(\tau_P(x) = \sum_{e\in P} \tau_e(x)\) be the costs per agent on
\(P\) assuming that the total traffic on $P$ is $x$.

Paths are denoted as tuples of vertices, i.e., $(v_1, \dots, v_k)$
with $v_i \in V$ is a path if for
$1 < i \le k$, $(v_{i-1}, v_{i}) \in E$ .  In addition, we consider paths as edge sets and
use set operators, which also translates to the notion of cost
functions, e.g., for paths $P$ and $Q$ let
\(\tau_{P\cap Q}(x) = \sum_{e \in P\cap Q} \tau_e(x)\). 

\subparagraph{Means of Influence.}  In the SAP problem, we assume that
we can influence all agents on a given \emph{original} $st$-path $Q$
and suggest a single \emph{alternative} $st$-path $P$.  

\subparagraph{Starting Situation and Uninfluenced Traffic.}  We assume
that there is existing traffic that satisfies all demands and that uninfluenced agents stick with their previous routes.
Note that this allows us to integrate the uninfluenced traffic into the cost functions. Thus, we can formalize it as if there was no initial traffic and that all demands are equal to~$0$ except for the traffic on the
original route $Q$ which satisfies the demand $d(s, t) > 0$.  For
brevity, we denote $d = d(s, t)$.






\subparagraph{Level of Control.}\label{subsec:control_rate}

We assume that agents make their own decisions.
Given an original route \(Q\) and alternative \(P\) a
\emph{psychological model} determines the amount of flow $x_P$ on $P$.
The flow on $Q$ is then $d - x_P$.  We consider the following three
psychological models; see \Cref{sec:psychmods} for formal definitions.
The \emph{System Optimum} assumes agents distribute optimally with
respect to the optimization criterion defined below.  In the
\emph{User Equilibrium}~\cite{Wardrop1952} agents act selfishly
leading to an equilibrium where no agent can improve by unilaterally
changing their route~\cite{Roughgarden_Tardos_2002}.  In the
\emph{Linear Model} we assume that the willingness to choose \(P\) is
linearly dependent on the ratio of the costs on $Q$ and \(P\).

\subparagraph{Optimization Criterion.}  The optimization criterion
formalizes the goal to be achieved, which is the \emph{overall travel
  time} for SAP. Hence, we interpret the cost functions \(\tau_e\) as
latency functions, i.e., the time a single agent needs to traverse the
edge \(e\). 
%
%
In the SAP problem, we only consider one alternative \(P\) to an
original route \(Q\).  Assume that we have a flow of $x \in [0, d]$ on
$P$.  Then, the edges in $P \setminus Q$ have flow $x$, the edges of
$Q \setminus P$ have flow $d - x$ and the edges of $P \cap Q$ have
flow $d$.  Thus, the overall cost is
%
%
\begin{equation}\label{eq:cost-P}
  \Tcost_P(x) = x \cdot \tau_{P
    \setminus Q}(x) + (d - x) \cdot \tau_{Q\setminus P}(d-x) + d\cdot
  \tau_{P \cap Q}(d).
\end{equation}
For the value \(x_P\) determined by the psychological model, the
actual cost of an alternative route $P$ is $\Tcost_P(x_P)$, which we
abbreviate with $\Tcost_P$.  Let $\mathcal P$ be a set of alternative
paths.  Computing the path $P$ in $\mathcal P$ with optimal $\Tcost_P$
is called \emph{scoring $\mathcal P$}.

\subparagraph*{Summary and Problem Variants.}
To sum up the SAP problem, given a route \(Q\) from \(s\) to~\(t\), a demand $d$ of agents per unit of time and a psychological model, the SAP problem asks for the optimal alternative route \(P\) such that the overall travel time $\Tcost_P$ is minimized. 

In general~$P$ can have arbitrarily many overlaps with $Q$.
Additionally, we consider two variants of SAP, where we require the
routes to be more or less disjoint.  \emph{Disjoint Single Alternative
  Path (D-SAP)} requires $P$ and $Q$ to be completely disjoint.
Moreover, \emph{1-Disjoint Single Alternative Path (1D-SAP)} requires
$P \setminus Q$ to be a single connected path, i.e., $P$ diverts from
$Q$ at most once but can share the edges at the start and the end with
$Q$.

\section{Algorithms for Single Alternative Path}\label{sec:sap-algos}

Consider two alternative paths $P_1$
and $P_2$ with cost functions $\tau_{P_1}$ and $\tau_{P_2}$,
respectively.  Assume that for any amount of traffic $x \in [0, d]$, the cost of
$P_1$ is not larger than of $P_2$, i.e.,
$\tau_{P_1}(x) \le \tau_{P_2}(x)$.  It
seems intuitive that it is never worse to choose $P_1$ over $P_2$.
However, this is not quite right for two reasons. First, it
does not hold for arbitrary
psychological models, which determine the amount of agents (potentially in a somewhat degenerate fashion) who choose $P_1$ and $P_2$, respectively, instead of the original route $Q$. Secondly, if the alternative route $P_1$ shares
many edges with the original route $Q$ it has only little potential to
distribute traffic, whereas the seemingly worse alternative $P_2$
could do better in this regard.

We resolve the first issue by
defining a property that we call (weak) Pareto-conformity.
Moreover, in~\Cref{sec:psychmods}, we show for various
psychological models that they are in fact Pareto-conform.  To resolve
the second issue with shared edges, we introduce a notion of dominance
between paths that takes the overlap with $Q$ into account.

Let $\tau_1$ and $\tau_2$ be two cost functions defined on the interval $[0, d]$
and let $\tau_i'$ denote the derivative of $\tau_i$\punctuationfootnote{In the
    remainder, we implicitly assume all cost functions to be only defined on
    $[0, d]$, e.g., $\tau_1 \le \tau_2$ means $\tau_1(x) \le \tau_2(x)$ for all
$x \in [0, d]$.  Also, we implicitly assume functions to be differentiable.}.
For two alternative paths $P_1$ and $P_2$, we say that $P_1$ \emph{dominates}
$P_2$, denoted by $P_1 \preceq P_2$, if $\tau_{P_1} \le \tau_{P_2}$ and
$\tau_{P_1 \cap Q}' \le \tau_{P_2 \cap Q}'$.  Note that, if $P_1 \cap Q = P_2
\cap Q$, then this simplifies to $\tau_{P_1} \le \tau_{P_2}$.

Intuitively, the requirement $\tau_{P_1} \le \tau_{P_2}$ indicates that $P_1$ is
the cheaper alternative compared to $P_2$.  Moreover, concerning $\tau_{P_1 \cap
Q}' \le \tau_{P_2 \cap Q}'$, note that this is equivalent to $\tau_{Q \setminus
P_1}' \ge \tau_{Q \setminus P_2}'$.  Thus, bypassing $Q \setminus P_1$ by using
$P_1$ saves more than if the same amount of drivers bypasses $Q \setminus P_2$
by using $P_2$.  With this, we can define the above-mentioned Pareto-conformity.

\begin{definition}\label{def:pareto-conformity}
  A psychological model is \emph{Pareto-conform} if $P_1 \preceq P_2$
  implies $\Tcost_{P_1} \le \Tcost_{P_2}$. It is \emph{weakly
    Pareto-conform} if this holds for paths that have equal
  intersection with $Q$.
\end{definition}

To simplify notation, we assume without loss of generality that there
are no two different paths $P_1$ and $P_2$ with $P_1 \preceq P_2$ and
$P_2 \preceq P_1$.  This can, e.g., be achieved by slight perturbation
of the cost functions, or by resolving every tie arbitrarily.


In the following we give different algorithms for the SAP, 1D-SAP and D-SAP problems.  The
algorithms involve solving one or more multi-criteria shortest path
problems as subroutine.
Algorithms for this problem range from the fundamental examination of the bicriteria case~\cite{Hansen_1980} to the usage of speed-up techniques~\cite{Delling_Wagner_2009, Mandow_DeLaCruz_2008} in the multi-criteria case.
One such algorithm is the multi-criteria
Dijkstra, which has exponential run
time in the worst case~\cite{Martins_1984} but is known to be
efficient in many practical applications~\cite{ParetoFeasible}.

The algorithms we present first
(Sections~\ref{sec:single-altern}--\ref{sec:disjoint-single-alt-path}) require
solving only a single multi-criteria shortest path problem, with the
D-SAP setting requiring fewer criteria than SAP and 1D-SAP. In
Sections~\ref{sec:1d-sap-dp} and~\ref{sec:dynamic-program-sap}, we
propose approaches that require multiple such searches.
Though the former seems preferable, the latter has some advantages.
It requires fewer criteria in the multi-criteria sub-problems, it
requires only weak Pareto-conformity for the 1-disjoint setting, and
it allows for easy parallelization.  Our experiments in~\Cref{sec:evaluation} indicate that the variants requiring fewer criteria
are often faster for long routes.

\subsection{Reduction to Multi-Criteria Shortest Path}\label{sec:single-altern}

We are now ready to solve SAP.~\Cref{def:pareto-conformity} directly yields the following
lemma.

\begin{lemma}\label{lem:sap-optimal-is-not-dominated}
For any instance of SAP with a Pareto-conform psychological model,
	there exists an optimal solution that is not dominated by any other alternative.
\end{lemma}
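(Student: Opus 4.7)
The plan is to extract a solution that is minimal with respect to $\preceq$ among all cost-optimal alternatives, and then argue that such a minimal optimum cannot be dominated by any alternative at all. The structure mirrors the standard trick of picking a Pareto-minimal witness inside a level set of the objective.

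First, I would observe that $\preceq$ is a partial order on the (finite) set of alternative $st$-paths. Reflexivity and transitivity follow directly from the componentwise definition in terms of the $\le$ relation on $\tau_P$ and $\tau_{P\cap Q}'$; antisymmetry is precisely the ``without loss of generality'' assumption stated right after~\Cref{def:pareto-conformity}. Because $G$ has only finitely many simple paths, the set $\mathcal{O}$ of cost-optimal alternatives is non-empty, and within this finite poset I can pick a $\preceq$-minimal element $P^{*}\in\mathcal{O}$.

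Next, I would show that this $P^{*}$ is not dominated by \emph{any} alternative, not merely by optimal ones. Suppose for contradiction that some $P'\neq P^{*}$ satisfies $P'\preceq P^{*}$. Pareto-conformity gives $\Tcost_{P'}\le\Tcost_{P^{*}}$, and since $P^{*}$ is already optimal this inequality is an equality, so $P'\in\mathcal{O}$. Antisymmetry together with $P'\neq P^{*}$ rules out $P^{*}\preceq P'$, so the relation $P'\preceq P^{*}$ is strict, contradicting the minimality of $P^{*}$ inside~$\mathcal{O}$.

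I do not anticipate a real obstacle here; the only point that requires a little care is the distinction between being a minimal element of the optimum set and being globally undominated. The former is cheap (finite poset), and the latter is bought by using Pareto-conformity to pull any hypothetical dominator back into $\mathcal{O}$, where minimality of $P^{*}$ then kicks in. No further properties of the psychological model or of the cost functions are needed beyond what is encapsulated in~\Cref{def:pareto-conformity}.
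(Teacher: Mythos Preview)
Your argument is correct and is exactly the natural unpacking of what the paper leaves implicit: the paper offers no separate proof and simply states that the lemma ``directly'' follows from \Cref{def:pareto-conformity} together with the tie-breaking convention. Your choice of a $\preceq$-minimal optimum and the pull-back of a hypothetical dominator into $\mathcal{O}$ via Pareto-conformity is precisely the intended one-line reasoning, just spelled out in full.
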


Thus, to solve SAP, it suffices to find all alternative paths that
are not dominated by other paths, and then choose the best among
these potential solutions.  We reduce the problem of computing the set
of potential solutions to a multi-criteria shortest path problem.  In
such a problem, each path corresponds to a point $p \in \mathbb Q^k$,
where the entry at the $i$-th position of $p$ is the cost of the path
with respect to the $i$-th criterion.  One then searches for all
solutions that are not Pareto dominated by other solutions.  For two
points $p_1, p_2 \in \mathbb Q^k$, $p_1$ \emph{Pareto dominates} $p_2$
if $p_1 \le p_2$ component-wise.  Finding all solutions that are not
Pareto dominated is the previously mentioned multi-criteria shortest
path problem.  How the transformation to a multi-criteria problem
exactly works depends on the cost functions.

Assume for now that $\tau(x) = a x^2 + b$ for positive $a$ and $b$.
We call the family of cost functions of this form \emph{canonical cost
  functions}.  It is closed under addition.  Thus, the cost function
of each path is also a canonical cost function.  Note that two
different canonical cost functions intersect in at most one point on
$[0, d]$.  Thus, we have $\tau_1 \le \tau_2$ if and only if
$\tau_1(0) \le \tau_2(0)$ and $\tau_1(d) \le \tau_2(d)$.  It follows
that requiring $\tau_1 \le \tau_2$ is equivalent to saying that
$(\tau_1(0), \tau_1(d))$ Pareto dominates $(\tau_2(0),
\tau_2(d))$. Additionally, the function $\tau_1 + \tau_2$ can be
represented by $(\tau_1(0)+\tau_2(0),
\tau_1(d)+\tau_2(d))$. Similarly, with \(\tau'_1(x)=2a_1x\) and
\(\tau'_2(x)=2a_2x\) we have \(\tau'_1\le\tau'_2\) if and only if
\(a_1\leq a_2\).  Addition works again as expected.

To generalize this concept, consider a class of functions $\mathcal T$
that is closed under addition.  We say that $\mathcal T$ has
\emph{Pareto dimension} $k$ if the following holds.  There exists a function $p\colon \mathcal T \to \mathbb Q^k$ such that $\tau_1$ dominates $\tau_2$ if and only if $p(\tau_1)$ Pareto dominates $p(\tau_2)$, and such that $p(\tau_1+\tau_2) = p(\tau_1)+p(\tau_2)$. We call $p$ the \emph{Pareto representation} of $\mathcal T$. The
above canonical cost functions have Pareto dimension~2 and their
derivatives have Pareto dimension~1.  

With this, $P_1 \preceq P_2$ reduces to having
$p(\tau_{P_1}) \le p(\tau_{P_2})$ and
$p'(\tau_{P_1 \cap Q}') \le p'(\tau_{P_2 \cap Q}')$, where~$p'$ is a
Pareto representation of the class of all derivatives of functions in
$\mathcal T$.  This is equivalent to the concatenation of
$p(\tau_{P_1})$ and $p'(\tau'_{P_1 \cap Q})$ Pareto dominating the
concatenation of $p(\tau_{P_2})$ and $p'(\tau'_{P_2 \cap Q})$.  Thus,
dominance of paths reduces to Pareto dominance.

\begin{restatable}{theorem}{sap}\label{thm:sap}
  SAP with Pareto-conform psychological model and cost functions with
  Pareto dimension \(k\) whose derivatives have Pareto dimension
  \(\ell\) reduces to solving a multi-criteria shortest path problem
  with $k+\ell$ criteria and scoring the result.
\end{restatable}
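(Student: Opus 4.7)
The plan is to invoke Lemma~\ref{lem:sap-optimal-is-not-dominated} to restrict the search to paths that are not dominated, translate the dominance relation $\preceq$ into coordinate-wise inequalities via the two Pareto representations, and then exhibit an explicit edge-labelling that makes these inequalities coincide with ordinary Pareto dominance on an MCSP instance. The final step is just scoring the resulting Pareto front with $\Tcost_P$.

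More concretely, let $p\colon \mathcal T \to \mathbb Q^k$ be the Pareto representation of the cost functions and $p'$ the Pareto representation of the class of derivatives. I would construct an MCSP instance on the same graph $G$ by assigning to each edge $e$ the label $\lambda(e) \in \mathbb Q^{k+\ell}$ whose first $k$ entries are $p(\tau_e)$ and whose last $\ell$ entries are $p'(\tau_e')$ if $e \in Q$ and $0\in\mathbb Q^\ell$ otherwise. Since $\mathcal T$ is closed under addition and $p,p'$ are additive, summing $\lambda$ along an arbitrary $st$-path $P$ yields the vector $\bigl(p(\tau_P),\, p'(\tau'_{P\cap Q})\bigr)$; the zero padding on edges outside $Q$ is precisely what guarantees that only the intersection $P \cap Q$ contributes to the last $\ell$ coordinates.

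By the defining property of $p$ and $p'$, we have $\tau_{P_1} \le \tau_{P_2}$ iff $p(\tau_{P_1}) \le p(\tau_{P_2})$ component-wise, and $\tau_{P_1\cap Q}' \le \tau_{P_2\cap Q}'$ iff $p'(\tau_{P_1\cap Q}') \le p'(\tau_{P_2\cap Q}')$ component-wise. Concatenating these two statements, $P_1 \preceq P_2$ is equivalent to the $(k+\ell)$-vector $\lambda(P_1)$ Pareto-dominating $\lambda(P_2)$. Hence the set of $\preceq$-non-dominated alternatives coincides with the Pareto front of the MCSP instance. Running an MCSP algorithm returns this front, and by Lemma~\ref{lem:sap-optimal-is-not-dominated} an optimal SAP solution is contained in it; evaluating $\Tcost_P$ on each returned path and taking the minimum completes the reduction.

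The only subtle point, which I would highlight explicitly in the write-up, is the treatment of edges outside $Q$ in the derivative coordinates: because $p'$ need not send the identically-zero function to the zero vector in general, one has to argue that only \emph{differences} in the $p'$-coordinates between paths matter for the Pareto comparison, so padding with $0$ on non-$Q$ edges is consistent with additivity and yields the correct total $p'(\tau'_{P\cap Q})$ up to a common offset shared by all paths; alternatively, one simply observes that the sum over $P\cap Q$ equals the sum over $P$ after zeroing the non-$Q$ entries, which is the cleanest formulation. Everything else is a mechanical verification that additivity and the equivalences above compose as claimed.
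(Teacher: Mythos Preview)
Your proposal is correct and follows essentially the same approach as the paper: assign to each edge the concatenation of $p(\tau_e)$ with $p'(\tau'_{e\cap Q})$, use additivity to recover $(p(\tau_P),\,p'(\tau'_{P\cap Q}))$ for any $st$-path $P$, and invoke \Cref{lem:sap-optimal-is-not-dominated} to reduce to scoring the Pareto front. Your ``subtle point'' is in fact a non-issue, since additivity of $p'$ already forces $p'(0)=0$ (from $p'(0)=p'(0+0)=p'(0)+p'(0)$), so the zero padding on edges outside $Q$ is exactly what the paper's expression $p'(\tau'_{e\cap Q})$ evaluates to; there is no offset to worry about.
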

\begin{proof}
  Let $p$ and $p'$ denote the Pareto mappings for the cost functions and for
  their derivatives, respectively. Every edge $e=(v_i, v_j)$ is mapped to the
  $k+\ell$-dimensional vector obtained by concatenating $p(\tau_e)$
  and $p'(\tau_{e\cap Q}')$.  Because
  \(p(\tau_1+\tau_2) = p(\tau_1)+p(\tau_2)\) and
  $p'(\tau_1' + \tau_2') = p'(\tau_1') + p'(\tau_2')$, solving the
  $k+\ell$-criteria shortest path problem from $s$ to $t$ with these
  vectors yields all $s$-$t$-paths that are not dominated. By
  \cref{lem:sap-optimal-is-not-dominated}, the optimal solution is
  among these paths and can thus be obtained by computing $\Tcost_P$
  for every path $P$ in the result and choosing the best one.
\end{proof}

\subsection{Enforcing 1-Disjoint Routes}\label{subsec:1d_SAP}

1D-SAP can be solved by modifying the graph
and then applying the same approach as above.
Let $Q = (v_1, \dots, v_q)$ with $s = v_1$, $t = v_q$. We consider
the graph \(G'\), which is a copy of~\(G\) where for each
\(v_i\in\{v_2,\ldots,v_{q-1}\}\) a node \(v_i'\) is added.  Moreover,
\(v_i\) in~\(G'\) has all outgoing edges of \(v_i\) in \(G\), but only
the incoming edge from \(v_{i-1}\). Similarly, \(v_i'\) in \(G'\) has
all incoming edges of \(v_i\) in \(G\), but only the outgoing edge to
\(v_{i+1}'\); see~\Cref{fig:1d-sap}.
With this, computing all non-dominated 1-disjoint paths in $G$ reduces
to computing all non-dominated paths in $G'$.

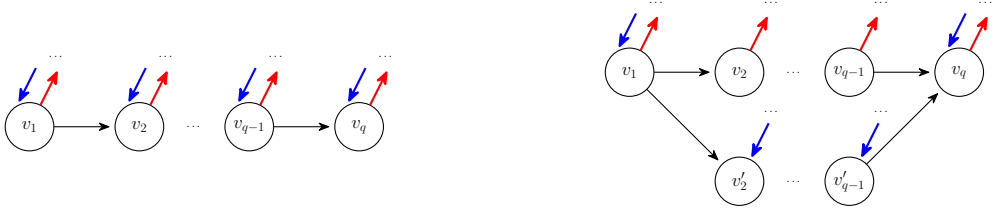
\begin{figure}
  \centering
  \hfill
  \begin{subfigure}{0.45\textwidth}
    \tikzexternalenable
\begin{tikzpicture}[->,>={Stealth[round,sep]},shorten >=1pt,auto,node distance=2cm,scale=0.4, every node/.style={transform shape},every path/.style={transform shape}]
  \begin{scope}[every node/.style={circle,draw=black,minimum size=1.6cm}]
    \node (1)              {\LARGE $v_1$};
    \node (2) [right =of 1]{\LARGE $v_2$};
    \node (3) [right =of 2]{\LARGE $v_{q-1}$};
    \node (4) [right =of 3]{\LARGE $v_{q}$};
\end{scope}

\begin{scope}[every edge/.style={draw=red,thick}, every path/.style={draw=red,thick}]
    \path[->] (2) edge node[above =of 2] {} +(1,2);
    \path[->] (3) edge node[above =of 3] {} +(1,2);
     \path[->] (1) edge node[above =of 1] {} +(1,2);
      \path[->] (4) edge node[above =of 4] {} +(1,2);
  \end{scope}
  
  \begin{scope}[every edge/.style={draw=blue,thick}, every path/.style={draw=blue,thick}]
   \path (4) edge [<-,transform canvas={xshift=-3mm, yshift=0.1mm}] node[above =of 4] {} +(1,2);
   \path (1) edge [<-,transform canvas={xshift=-3mm, yshift=0.1mm}] node[above =of 1] {} +(1,2);
     \path (2) edge [<-,transform canvas={xshift=-3mm, yshift=0.1mm}] node[above =of 2] {} +(1,2);
        \path (3) edge [<-,transform canvas={xshift=-3mm, yshift=0.1mm}] node[above =of 3] {} +(1,2);
  \end{scope}
  
    \draw (1) edge  (2);
    \draw (3) edge (4);
    \path (2) -- node[auto=false]{\ldots} (3);
    \path (2) -- node[auto=false]{\ldots} +(1.5,3.9);
    \path (3) -- node[auto=false]{\ldots} +(1.5,3.9);
     \path (1) -- node[auto=false]{\ldots} +(1.5,3.9);
      \path (4) -- node[auto=false]{\ldots} +(1.5,3.9);

\end{tikzpicture}
\tikzexternaldisable
  \end{subfigure}\hfill
  \begin{subfigure}{0.45\textwidth}
    \tikzexternalenable
\begin{tikzpicture}[->,>={Stealth[round,sep]},shorten >=1pt,auto,node distance=2cm,scale=0.4, every node/.style={transform shape},every path/.style={transform shape}]
  \begin{scope}[every node/.style={circle,draw=black,minimum size=1.6cm}]
    \node (1)              {\LARGE $v_1$};
    \node (2) [right =of 1]{\LARGE $v_2$};
    \node (3) [right =of 2]{\LARGE $v_{q-1}$};
    \node (4) [right =of 3]{\LARGE $v_{q}$};
    \node (5) [below =of 2]{\LARGE $v_2'$};
    \node (6) [right =of 5]{\LARGE $v_{q-1}'$};
\end{scope}
\node (0) at (4) {};
\begin{scope}[every edge/.style={draw=red,thick}, every path/.style={draw=red,thick}]
    \path[->] (2) edge node[above =of 2] {} +(1,2);
    \path[->] (3) edge node[above =of 3] {} +(1,2);
     \path[->] (1) edge node[above =of 1] {} +(1,2);
      \path[->] (4) edge node[above =of 4] {} +(1,2);
  \end{scope}
  
  \begin{scope}[every edge/.style={draw=blue,thick}, every path/.style={draw=blue,thick}]
   \path[<-] (6) edge node[above =of 6] {} +(1,2);
   \path[<-] (5) edge node[above =of 5] {} +(1,2);
   \path (4) edge [<-,transform canvas={xshift=-3mm, yshift=0.1mm}] node[above =of 4] {} +(1,2);
    \path (1) edge [<-,transform canvas={xshift=-3mm, yshift=0.1mm}] node[above =of 1] {} +(1,2);
  \end{scope}
  
    \draw (1) edge  (2);
    \draw (1) edge  (5);
    \draw (3) edge (4);
    \draw (6) edge (4);
    \path (2) -- node[auto=false]{\ldots} (3);
    \path (5) -- node[auto=false]{\ldots} (6);
    \path (2) -- node[auto=false]{\ldots} +(1.8,3.9);
    \path (3) -- node[auto=false]{\ldots} +(1.8,3.9);
     \path (1) -- node[auto=false]{\ldots} +(1.5,3.9);
     \path (5) -- node[auto=false]{\ldots} +(1.8,3.9);
      \path (6) -- node[auto=false]{\ldots} +(1.8,3.9);
      \path (4) -- node[auto=false]{\ldots} +(1.5,3.9);

\end{tikzpicture}
\tikzexternaldisable
  \end{subfigure}
  \hfill
  \caption{Graph transformation for the 1D-SAP algorithm. Blue edges represent an arbitrary number of incoming edges, red edges an arbitrary number of outgoing edges.}\label{fig:1d-sap}
\end{figure}

\begin{restatable}{theorem}{1d-sap-by-sap}\label{thm:1d-sap-by-sap}
    \Cref{thm:sap} also holds for 1D-SAP.
\end{restatable}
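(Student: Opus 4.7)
The plan is to reduce 1D-SAP on $G$ to an instance of the reduction of \Cref{thm:sap} applied to the modified graph $G'$. Three things need to be verified: that $s$-$t$-paths in $G'$ correspond to 1-disjoint $s$-$t$-paths in $G$; that the Pareto representation of cost and derivative transfers under this correspondence; and that the analogue of \Cref{lem:sap-optimal-is-not-dominated} holds when one restricts attention to 1-disjoint alternatives.

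For the first step, I would identify each $v_i'$ with $v_i$. In one direction, a 1-disjoint path $P$ whose overlap with $Q$ consists of a prefix $v_1 v_2 \cdots v_i$ and a suffix $v_j v_{j+1} \cdots v_q$ lifts to $G'$ by traversing the top copies $v_1, v_2, \ldots, v_i$, then following the detour unchanged, then traversing the bottom copies $v_j', v_{j+1}', \ldots, v_{q-1}'$ before ending at $v_q$. In the other direction, a top-chain vertex $v_i$ can be entered only from $v_{i-1}$, while a bottom-chain vertex $v_i'$ can be exited only to $v_{i+1}'$ (or to $v_q$ when $i = q-1$). Consequently, every $s$-$t$-path in $G'$ has the shape ``top prefix -- middle detour -- bottom suffix'' and maps back to a 1-disjoint path in $G$.

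For the second step, the Pareto representation of \Cref{thm:sap} transfers unchanged: each edge $e$ of $G'$ is labeled $(p(\tau_e), p'(\tau_{e \cap Q}'))$, where the derivative component contributes exactly when $e$ corresponds to a $Q$-edge of $G$. The top-chain edges $(v_i, v_{i+1})$, the bottom-chain edges $(v_i', v_{i+1}')$, and the cross-edges such as $(v_1, v_2')$ and $(v_{q-1}', v_q)$ all inherit from $Q$-edges of $G$ and carry derivative weight; all other edges contribute only the cost component. Summing along a $G'$-path therefore yields exactly $(p(\tau_P), p'(\tau_{P \cap Q}'))$ for the corresponding 1-disjoint $G$-path $P$. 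Finally, \Cref{lem:sap-optimal-is-not-dominated} remains valid under restriction to 1-disjoint alternatives because Pareto-conformity is a statement about pairs of paths; hence the 1D-SAP optimum lies among the non-dominated candidates returned by the multi-criteria search on $G'$, which are scored as in \Cref{thm:sap}.

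The main subtlety lies in step one: one must carefully argue that the deletion of incoming and outgoing edges in $G'$ forces the single-detour shape and precludes multiple crossings between $Q$ and the detour. A minor bookkeeping item is that a simple $G'$-path may visit both $v_i$ and $v_i'$ for some $i$, mapping to a non-simple walk in $G$; since costs are positive and monotone, such a walk is strictly worse than the simple 1-disjoint sub-path obtained by excising the detour, so these candidates can be discarded without affecting the optimum.
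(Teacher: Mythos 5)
Your proof follows essentially the same route as the paper's: the same split/merge construction of $G'$, the same prefix/suffix argument forcing the single-detour shape, and the same appeal to \Cref{lem:sap-optimal-is-not-dominated} after transferring the Pareto representation edge-by-edge. Your closing remark about simple $G'$-paths that visit both $v_k$ and $v_k'$ for some $k$ (and hence map back to non-simple walks in $G$, namely $Q$ plus a cycle) flags a genuine subtlety that the paper's proof passes over in silence, and your excision argument is the right way to dismiss these candidates: such a walk is never scored better than $Q$ itself, and with positive cost and derivative coefficients it cannot dominate any proper 1-disjoint alternative, so it neither displaces the true optimum from the Pareto front nor produces a spuriously good score.
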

\begin{proof}
  We show that for every 1-disjoint $st$-path \(P\) in \(G\) there is
  an $st$-path $P'$ in \(G'\) with the same cost function and vice
  versa.
  With this we can apply the same approach as in \Cref{thm:sap}
  to compute the set of non-dominated \(st\)-paths in \(G'\), which represents
  the set of non-dominated 1-disjoint \(st\)-paths in \(G\). Afterwards, we
  compute the optimal solution in $G$ as in \Cref{thm:sap}.
  Let $P'$ be an $st$-path in $G'$ and let
  $Q_{\mathrm{out}} = (s, v_2, \dots, v_{q-1})$ and
  $Q_{\mathrm{in}} = (v_2', \dots, v_{q-1}', t)$ be the parts of the
  original route and its copy in the modified graph~$G'$ that have the
  outgoing and incoming edges augmented with $s$ and $t$,
  respectively.  First observe, that $P' \cap Q_{\mathrm{out}}$ is a
  prefix of $P'$, as $P'$ cannot get back to $Q_{\mathrm{out}}$ after
  it diverted from it due to the fact that we removed all incoming
  edges from vertices in $Q_{\mathrm{out}}$ (other than $s$).
  Similarly, $P' \cap Q_{\mathrm{in}}$ is a suffix of $P'$ as $t$ is
  the only vertex on $Q_{\mathrm{in}}$ with outgoing edges.  Thus, the
  path $P$ in $G$ we get by replacing every $v_i'$ in $P'$ with the
  corresponding $v_i$ is 1-disjoint and has the same cost function as
  $P'$.  Analogously, every 1-disjoint $st$-path $P$ in $G$ can be
  transformed to such a path $P'$ in $G'$ with the same cost.
\end{proof}

\subsection{Fewer Criteria for Disjoint SAP}\label{sec:disjoint-single-alt-path}

We now consider the D-SAP variant whose major advantage is
that we can solve it with fewer criteria in the multi-criteria
shortest path part of the algorithm.
Analogously to \Cref{lem:sap-optimal-is-not-dominated}, the following lemma
follows from \Cref{def:pareto-conformity}.

\begin{lemma}\label{lem:dsap-optimal-is-not-dominated}
	For any instance of D-SAP with a weakly Pareto-conform psychological model,
	there exists an optimal solution that is not dominated by any other alternative disjoint from~$Q$.
\end{lemma}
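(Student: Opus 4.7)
The plan is to mirror the argument of \Cref{lem:sap-optimal-is-not-dominated}, but to carefully exploit the fact that the D-SAP setting only compares alternatives that are disjoint from $Q$. The key observation is that any two disjoint alternatives $P_1, P_2$ satisfy $P_1 \cap Q = \emptyset = P_2 \cap Q$, so in particular they have \emph{equal} intersection with $Q$. Consequently, the hypothesis required by weak Pareto-conformity is automatically met whenever we compare two D-SAP candidates. This means that for disjoint alternatives, $P_1 \preceq P_2$ already implies $\Tcost_{P_1} \le \Tcost_{P_2}$ under the mere assumption of weak Pareto-conformity.

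Given this, I would proceed as follows. Consider the set $\mathcal{S}$ of all $st$-paths in $G$ that are disjoint from $Q$ and achieve the optimal D-SAP cost. Since $G$ is finite, so is $\mathcal{S}$, and by the problem's assumption $\mathcal{S}$ is nonempty. Let $\hat P$ be a $\preceq$-minimal element of $\mathcal{S}$; such an element exists because $\preceq$ is antisymmetric on distinct paths by the perturbation/tie-breaking convention adopted earlier in the section, and $\mathcal{S}$ is finite.

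I then claim that $\hat P$ is not dominated by any alternative disjoint from $Q$. Suppose for contradiction that there is a disjoint alternative $P \ne \hat P$ with $P \preceq \hat P$. By the observation above (applied to $P$ and $\hat P$, both disjoint from $Q$), weak Pareto-conformity yields $\Tcost_P \le \Tcost_{\hat P}$. Since $\hat P$ is optimal, so is $P$, meaning $P \in \mathcal{S}$. But $P \preceq \hat P$ with $P \ne \hat P$ contradicts the $\preceq$-minimality of $\hat P$ in $\mathcal{S}$.

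There is essentially no technical obstacle: the only thing to be careful about is why weak Pareto-conformity is sufficient even though SAP in general required the full notion. The entire content of the proof is the single observation that disjointness from $Q$ forces equality of the intersections $P_1 \cap Q$ and $P_2 \cap Q$ (both empty), so weak and full Pareto-conformity coincide on the D-SAP candidate set.
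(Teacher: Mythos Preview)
Your proposal is correct and follows essentially the same approach as the paper, which simply remarks that the lemma follows from \Cref{def:pareto-conformity} analogously to \Cref{lem:sap-optimal-is-not-dominated}. Your write-up is more explicit than the paper's one-line justification, but the core observation---that disjoint alternatives have equal (empty) intersection with $Q$, so weak Pareto-conformity already yields $\Tcost_{P_1}\le\Tcost_{P_2}$ from $P_1\preceq P_2$---is exactly the intended content.
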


To guarantee that we only find paths disjoint from $Q$, we remove $Q$
from the graph.  For two paths $P_1$ and $P_2$ in the resulting graph,
the dominance $P_1 \preceq P_2$ simplifies to
$\tau_{P_1}\le \tau_{P_2}$.
This observation together with
\Cref{lem:dsap-optimal-is-not-dominated} gives us the following
theorem.  Note that we only need weak Pareto-conformity here, as all
paths have no intersection with $Q$.

\begin{theorem}
  D-SAP with a weakly Pareto-conform psychological model and cost
  functions with Pareto dimension $k$ reduces to solving a
  multi-criteria shortest path problem with $k$ criteria and scoring
  the result.
\end{theorem}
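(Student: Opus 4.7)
The plan is to follow the same template as the proof of \Cref{thm:sap}, but to exploit two simplifications that arise from requiring full disjointness from $Q$: the graph is trimmed to rule out any overlap with $Q$, and dominance reduces to the pointwise comparison of cost functions only, eliminating the need for the derivative criteria.

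First I would construct the modified graph $G^\star$ by deleting every edge of $Q$ (equivalently, every internal vertex of $Q$). Then every $st$-path in $G^\star$ is, by construction, edge-disjoint from $Q$, and conversely every $Q$-disjoint $st$-path in $G$ survives in $G^\star$. Thus the set of candidate alternatives is exactly the set of $st$-paths in $G^\star$, preserving their cost functions.

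Next I would use the simplification of $\preceq$ noted just above the statement: for two paths $P_1, P_2$ in $G^\star$ we have $P_1 \cap Q = P_2 \cap Q = \emptyset$, so the second condition $\tau_{P_1 \cap Q}' \le \tau_{P_2 \cap Q}'$ holds trivially and $P_1 \preceq P_2$ becomes simply $\tau_{P_1} \le \tau_{P_2}$. Letting $p$ be a Pareto representation for the class of cost functions (of dimension $k$), I would map every edge $e$ of $G^\star$ to the $k$-vector $p(\tau_e)$. Because $p(\tau_1+\tau_2)=p(\tau_1)+p(\tau_2)$, the cost vector of any $st$-path $P$ in $G^\star$ equals $p(\tau_P)$, and because $p$ characterizes dominance, Pareto-incomparable path vectors correspond exactly to the non-dominated alternatives.

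Solving the $k$-criteria shortest path problem in $G^\star$ therefore returns a superset that contains every non-dominated $Q$-disjoint alternative; by \Cref{lem:dsap-optimal-is-not-dominated}, the optimum D-SAP solution lies in this set. A final pass that evaluates $\Tcost_P$ for each returned path and picks the minimum — i.e., scoring — delivers the solution. The only step that needs any care is checking that weak Pareto-conformity really suffices; this is immediate because the hypothesis of \Cref{def:pareto-conformity} required only for paths with equal intersection with $Q$ is met by all candidate paths, which share the (empty) intersection $P \cap Q = \emptyset$. No genuine obstacle arises beyond bookkeeping; the proof is a direct specialization of \Cref{thm:sap}.
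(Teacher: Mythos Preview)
Your proposal is correct and follows essentially the same approach as the paper: remove $Q$ from the graph, observe that dominance collapses to $\tau_{P_1}\le\tau_{P_2}$ so only the $k$ criteria from the Pareto representation $p$ are needed, invoke \Cref{lem:dsap-optimal-is-not-dominated}, and score the Pareto set. One small slip: deleting the edges of $Q$ is \emph{not} equivalent to deleting the internal vertices of $Q$ (the latter forbids alternatives that touch $Q$ at an intermediate vertex without using any of its edges), and since the paper treats paths as edge sets, the edge-deletion you state first is the correct construction---just drop the parenthetical.
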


\subsection{Fewer Criteria for 1-Disjoint SAP}\label{sec:1d-sap-dp}
We start by deleting the edges of $Q = (v_1, \dots, v_q)$ from the graph. In the resulting
graph, for every pair $1 \le i < j \le q$, we calculate the set
$\mathcal P_{i, j}$ of all routes between $v_i$ and $v_j$ that are
minimal with respect to dominance. We define the corresponding
\emph{augmented path} for a path \(P \in \mathcal P_{i, j}\) as
$\widetilde P = (v_1, \dots, v_{i - 1}) \cup P \cup (v_{j + 1} \dots,
v_q)$, which is a \emph\emph{1-disjoint} path from $s$ to $t$.  We
denote the set of paths from~$s$ to $t$ obtained by augmenting all
paths in $\mathcal P_{i, j}$ by $\widetilde {\mathcal P}_{i, j}$.

\begin{restatable}{lemma}{fewer-crit-1-disj}
  For an instance of 1D-SAP with weakly Pareto-conform psychological model,
  there exists an optimal solution among the paths in the sets
  $\widetilde {\mathcal P}_{i, j}$.
\end{restatable}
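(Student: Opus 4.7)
The plan is to take an arbitrary optimal 1D-SAP solution $P^*$, identify its branching and returning points on $Q$, and argue that swapping its detour for a minimal representative from the appropriate $\mathcal{P}_{i,j}$ preserves optimality.

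First, I would exploit the structural definition of 1-disjoint paths. Writing $Q = (v_1, \dots, v_q)$, the set $P^* \cap Q$ consists of a (possibly empty) prefix $(v_1,\dots,v_i)$ of $Q$ followed by a (possibly empty) suffix $(v_j,\dots,v_q)$ of $Q$, for some indices $1 \le i < j \le q$ chosen as the last vertex of $Q$ that $P^*$ visits before diverting and the first vertex of $Q$ on which it rejoins. Removing these prefix and suffix edges from $P^*$ yields a path $R$ from $v_i$ to $v_j$ that uses no edge of $Q$, hence $R$ is a path in the graph with $E(Q)$ deleted, which is precisely the graph in which $\mathcal{P}_{i,j}$ is computed.

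Next, I would apply the definition of $\mathcal{P}_{i,j}$: by construction it contains a path $P' \preceq R$ (either $R$ itself, if $R$ is non-dominated, or some dominating representative). Since neither $R$ nor $P'$ shares any edge with $Q$, dominance in the reduced graph simplifies as noted in \Cref{sec:disjoint-single-alt-path} to $\tau_{P'} \le \tau_R$. Now consider the augmented path $\widetilde{P'} = (v_1,\dots,v_i) \cup P' \cup (v_j,\dots,v_q)$, which belongs to $\widetilde{\mathcal P}_{i,j}$. Crucially, $\widetilde{P'} \cap Q = P^* \cap Q$, so in particular the derivative condition in \Cref{def:pareto-conformity} is satisfied with equality. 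Combining this equality of overlaps with $\tau_{P'} \le \tau_R$ yields $\tau_{\widetilde{P'}} \le \tau_{P^*}$, and therefore $\widetilde{P'} \preceq P^*$ as alternatives with equal intersection with $Q$.

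Finally, since the psychological model is weakly Pareto-conform and the two alternatives have equal overlap with $Q$, \Cref{def:pareto-conformity} gives $\Tcost_{\widetilde{P'}} \le \Tcost_{P^*}$. Optimality of $P^*$ then forces equality, so $\widetilde{P'} \in \widetilde{\mathcal P}_{i,j}$ is itself an optimal 1D-SAP solution. The only delicate points I anticipate are handling the degenerate cases where $i = 1$ or $j = q$ (so that $P^*$ does not share a prefix or suffix with $Q$) and confirming that $R$ actually lies in the subgraph used to define $\mathcal{P}_{i,j}$; both follow directly from the definition of 1-disjointness and from the removal of $E(Q)$, so I do not expect a real obstacle beyond careful bookkeeping.
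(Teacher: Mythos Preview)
Your argument is correct and follows essentially the same route as the paper's proof: take an optimal 1-disjoint alternative, split off its detour between the divergence and rejoin points on $Q$, replace that detour by a non-dominated representative from $\mathcal{P}_{i,j}$, observe that the augmented paths share the same intersection with $Q$, and invoke weak Pareto-conformity. The paper's version is terser and does not spell out the prefix/suffix bookkeeping or the simplification of $\preceq$ for $Q$-disjoint subpaths, but the logical skeleton is identical.
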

\begin{proof} Let $\widetilde P$ be an optimal alternative path and let $v_i$ and
  $v_j$ be the vertices where $\widetilde P$ diverts and rejoins $Q$
  respectively.  Moreover, let $P$ be the subpath of $\widetilde P$
  from $v_i$ to $v_j$.  If $P$ is in $\mathcal P_{i, j}$, then
  $\widetilde P$ is also in $\widetilde {\mathcal P}_{i, j}$ and the
  statement is true.  Otherwise, if $P$ is not in $\mathcal P_{i, j}$,
  there is a different path $P'$ from $v_i$ to $v_j$ that dominates
  $P$.  Then $\widetilde P'$ also dominates $\widetilde P$, as in both
  cases we augment the paths with the same edges, which equally
  increases their cost functions and ensures
  $\widetilde P' \cap Q = \widetilde P \cap Q$.  Thus, by weak
  Pareto-conformity (\Cref{def:pareto-conformity}), the path
  $\widetilde P'$ is not worse than $\widetilde P$ and thus
  $\widetilde P'$ is an optimal alternative route that is contained in
  $\widetilde {\mathcal P}_{i, j}$.
\end{proof}

From~\Cref{sec:disjoint-single-alt-path}, we know that we can compute
all non-dominated paths from $v_i$ to $v_j$ by using a multi-criteria
shortest path algorithm.  Thus, 1D-SAP reduces to solving
$\binom{q}{2} \in \Theta(q^2)$ multi-criteria shortest path problems, one
for each pair of vertices $v_i, v_j \in Q$.  We note that many
shortest path algorithms actually solve a more general problem by
computing paths from a single start to all other vertices. Thus,
instead of $\Theta(q^2)$ shortest path problems, we can solve $q$
multi-target shortest path problems, using each vertex in $Q$ as start
once.

\begin{theorem}
  1D-SAP with weakly Pareto-conform psychological model and cost
  functions with Pareto dimension $k$ reduces to solving $q$
  multi-criteria multi-target shortest path problems with $k$ criteria
  and scoring the resulting augmented paths.
\end{theorem}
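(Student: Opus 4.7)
The plan is to combine the preceding lemma with the D-SAP reduction from Section~\ref{sec:disjoint-single-alt-path} and the observation that many shortest path algorithms naturally produce solutions to all targets at once. By the lemma, it suffices to search within the union $\bigcup_{i<j} \widetilde{\mathcal P}_{i,j}$ of augmented paths, so the problem reduces to computing each $\mathcal P_{i,j}$ and then scoring the resulting augmented paths.

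The first step is to delete the edges of $Q$ from $G$; in the resulting graph $G - E(Q)$, every $v_iv_j$-path is edge-disjoint from $Q$. For two such paths $P_1, P_2$ the dominance relation $P_1 \preceq P_2$ simplifies to $\tau_{P_1} \le \tau_{P_2}$ because $P_1 \cap Q = P_2 \cap Q = \emptyset$, so the derivative condition $\tau_{P_1 \cap Q}' \le \tau_{P_2 \cap Q}'$ is trivially satisfied. Hence, exactly as in the D-SAP reduction, using the Pareto representation $p$ of the cost functions, the set $\mathcal P_{i,j}$ of non-dominated $v_iv_j$-paths in $G - E(Q)$ is obtained as the Pareto front of a multi-criteria shortest path problem with $k$ criteria.

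The second step is to observe that fixing the start vertex $v_i$ and running a multi-criteria \emph{multi-target} shortest path algorithm from $v_i$ in $G - E(Q)$ yields $\mathcal P_{i,j}$ simultaneously for every $j > i$, still with only $k$ criteria. Iterating this over each $v_i \in Q$ therefore produces all sets $\mathcal P_{i,j}$ using $q$ such multi-target searches in total. For each path $P \in \mathcal P_{i,j}$ we construct its augmentation $\widetilde P = (v_1, \dots, v_{i-1}) \cup P \cup (v_{j+1}, \dots, v_q)$, compute $\Tcost_{\widetilde P}$, and return the one of minimum cost; by the preceding lemma this is an optimal 1D-SAP solution.

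The only subtle point is justifying that this covers all candidate optima: this is exactly the content of the preceding lemma, which guarantees that some optimum already lies in a $\widetilde{\mathcal P}_{i,j}$, and relies on the fact that augmenting both $P$ and a dominator $P'$ with the same prefix/suffix of $Q$ preserves the intersection with $Q$, so that weak Pareto-conformity applies. Everything else is a direct instantiation of Theorems from Sections~\ref{sec:single-altern} and~\ref{sec:disjoint-single-alt-path}, so no further technical work is required.
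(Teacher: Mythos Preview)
Your proposal is correct and follows essentially the same approach as the paper: invoke the preceding lemma to restrict attention to the sets $\widetilde{\mathcal P}_{i,j}$, use the D-SAP observation that dominance on $Q$-disjoint paths collapses to $\tau_{P_1}\le\tau_{P_2}$ so that $k$ criteria suffice, and replace the $\Theta(q^2)$ single-target searches by $q$ multi-target ones. If anything, you spell out the reduction more explicitly than the paper does.
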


\subsection{Fewer Criteria for SAP}\label{sec:dynamic-program-sap}
We now provide an algorithm for the SAP problem that requires fewer
criteria. We use a dynamic program that combines non-dominated subpaths to
obtain the optimal solution. To formalize this, we need
the following additional notation.  For $v_i, v_j \in Q$ with $i < j$, a path
from
$v_i$ to $v_j$ is called \emph{$Q$-path} or more specifically
\emph{$Q_{i,j}$-path}.  A set $A$ of $Q_{i,j}$-paths is \emph{reduced}
if no path in $A$ is dominated by another path in $A$.  Let $A$
and $B$ be two reduced sets of $Q_{i,j}$-paths.  Their \emph{reduced
  union} is obtained by eliminating from $A \cup B$ all paths that are dominated by another path in $A \cup B$.  Moreover, let $A$ and
$B$ be two reduced sets of $Q_{i,j}$ and $Q_{j, k}$-paths,
respectively.  Then their \emph{reduced join} is obtained by
concatenating every path in $A$ with every path in $B$ and
eliminating all dominated paths.

We start by applying the algorithm from \cref{sec:1d-sap-dp}, computing the sets \(\mathcal{P}_{i,j}\) for all
\(1\leq i \leq j\leq q\), which are the reduced sets of
all $Q_{i,j}$-paths that are disjoint from \(Q\). Then, we compute sets $\mathcal
P_j$ of $Q_{1, j}$-paths and one can show that $\mathcal P_j$ is in fact the reduced set of all $Q_{1,j}$-paths.  We initialize
$\mathcal P_1 = \set{(v_1)}$.  Now, assume we have computed
$\mathcal P_i$ for all $i < j$. We obtain $\mathcal P_j$ as the
reduced union of the following sets of $Q_{1, j}$-paths: the reduced
join of $\mathcal P_i$ and $\mathcal P_{i, j}$ for every $i < j$, and
the reduced join of $\mathcal P_{j - 1}$ and $\{(v_{j-1}, v_j)\}$.

\begin{restatable}{lemma}{fewer-crit-sap-dp}
    For an instance of SAP with a Pareto-conform psychological model,
    there exists an optimal solution among the paths in \(\mathcal P_q\).
\end{restatable}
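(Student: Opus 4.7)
The plan is to prove the stronger claim that, for every $j \in \{1, \ldots, q\}$, the set $\mathcal P_j$ computed by the dynamic program coincides with the reduced set of \emph{all} $Q_{1,j}$-paths. Specializing this to $j = q$ and invoking \Cref{lem:sap-optimal-is-not-dominated}, which guarantees the existence of a non-dominated optimal $st$-path under a Pareto-conform psychological model, then yields the lemma. I would proceed by strong induction on $j$; the base case $j = 1$ is trivial since $(v_1)$ is the only $Q_{1,1}$-path and equals $\mathcal P_1$.

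For the inductive step, fix $j \ge 2$ and a non-dominated $Q_{1,j}$-path $P$. Let $v_i$ be the last $Q$-vertex visited by $P$ strictly before $v_j$; it exists because $v_1 \in P$. Split $P = P_1 \cdot P_2$ at $v_i$, so that $P_1$ is a $Q_{1,i}$-path and $P_2$ is a $Q_{i,j}$-path whose interior avoids $Q$. I would distinguish two cases. If $P_2$ equals the single edge $(v_{j-1}, v_j)$, forcing $i = j-1$, then $P = P_1 \cdot (v_{j-1}, v_j)$ with $P_1$ a $Q_{1,j-1}$-path. Otherwise, any $Q$-edge used by $P_2$ would introduce an interior $Q$-vertex or contradict the choice of $i$, so $P_2$ is edge-disjoint from $Q$ and is therefore a candidate for $\mathcal P_{i,j}$.

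The crux is to propagate non-dominance through this decomposition. Since $P_1$ and $P_2$ meet only at $v_i$, the intersection $(P_1 \cdot P_2) \cap Q = (P_1 \cap Q) \sqcup (P_2 \cap Q)$ is a disjoint edge union, so cost functions and their derivatives add across the split. Consequently, if some $P_1'$ strictly dominated $P_1$, then $P_1' \cdot P_2$ would dominate $P$, contradicting non-dominance of $P$; the symmetric argument shows $P_2$ is non-dominated among its candidate set. Hence by the inductive hypothesis $P_1 \in \mathcal P_i$ (or $P_1 \in \mathcal P_{j-1}$ in the first case), and by the analysis of \Cref{sec:1d-sap-dp} $P_2 \in \mathcal P_{i,j}$ in the second case. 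Thus $P$ is generated by one of the reduced joins whose reduced union defines $\mathcal P_j$, and the non-dominance of $P$ ensures it survives both reduction steps and lies in $\mathcal P_j$. The main obstacle I foresee is the case analysis justifying that no $Q$-edge can appear in the interior of $P_2$, together with carefully verifying that the additive structure across the split $v_i$ transfers dominance in both directions; the rest is bookkeeping on the definitions of reduced join and reduced union.
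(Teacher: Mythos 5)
Your proposal is correct and follows essentially the same route as the paper's proof: strong induction on $j$ showing every non-dominated $Q_{1,j}$-path lies in $\mathcal P_j$, splitting at the last $Q$-vertex $v_i$ before $v_j$, distinguishing the single-edge case $(v_{j-1},v_j)$ from the case where $P_2$ is edge-disjoint from $Q$, and transferring dominance across the split via additivity of the cost functions and their $Q$-restricted derivatives. The only cosmetic difference is that you announce the stronger claim of exact equality with the reduced set, whereas the paper (like your actual argument) only needs and only proves the containment direction.
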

\begin{proof}
  We show for every \(1 \le j \le q\) that all paths from \(v_1\) to
  \(v_j\) that are not dominated by any other path from \(v_1\) to
  \(v_j\) are contained in \(\mathcal{P}_j\).

  We prove by induction. For \(j = 1\), we have
  \(\mathcal{P}_1 = \set{(v_1)}\), which is the set of all
  non-dominated \(v_1,v_1\)-paths. Now, assume that for all
  \(1 \le i < j\), \(\mathcal{P}_i\) is the set of non-dominated paths
  from \(v_1\) to \(v_i\). Let \(P\) be a path from \(v_1\) to \(v_j\)
  that is not dominated by any other path from \(v_1\) to \(v_j\). We
  consider two cases.

  Case 1: \((v_{j-1}, v_j) \in P\). Let
  \(P_{j-1} = P \setminus \set{(v_{i-1},v_i)}\).  If
  \(P_{j-1} \in \mathcal{P}_{j-1}\), then, we also have
  \(P \in \mathcal{P}_j\) and there is nothing to show.  Otherwise, if
  \(P_{j-1} \notin \mathcal{P}_{j-1}\), there exists a
  \(P_{j-1}' \in \mathcal{P}_{j-1}\) with
  \(P_{j-1}' \preceq P_{j-1}\), which implies
  \(P_{j-1}' \cup \set{(v_{j-1},v_j)} \preceq P_{j-1} \cup
  \set{(v_{j-1},v_j)} = P\). This is a contradiction to our assumption
  that \(P\) is not dominated by any other path from \(v_1\) to
  \(v_j\).

  Case 2: \((v_{j-1}, v_j) \notin P\).  Let $v_i$ be the last vertex
  $P$ shares with $Q$, i.e., $1 \le i < j$ is the maximum such that
  $v_i$ lies on $P$.  Then $P$ consists of two subpaths
  \(P = P_1 \cup P_2\) such that $P_1$ goes from \(v_1\) to \(v_i\)
  and \(P_2\) from \(v_i\) to \(v_j\).  Note that $P_2$ is disjoint from
  $Q$.  If \(P_1 \in \mathcal{P}_i\) and
  \(P_2 \in \mathcal{P}_{i,j}\), then we also have
  \(P \in \mathcal{P}_j\) and there is nothing to show.  Thus, it
  remains to deal with the cases that \(P_1 \notin \mathcal{P}_i\) or
  \(P_2 \notin \mathcal{P}_{i,j}\).

  In the first case, for some path \(P_1' \in \mathcal{P}_i\), we have \(P_1'
  \preceq P_1\). Since \(\tau_{P_1'} \le \tau_{P_1}\) implies
  \(\tau_{P_1'\cup P_2} \le \tau_{P_1 \cup P_2}\) and since
  \(\tau_{(P_1' \cup P_2)\cap Q}' = \tau_{P_1'\cap Q}' \le
  \tau_{P_1\cap Q}' = \tau_{(P_1 \cup P_2)\cap Q}'\), we have \(P_1'
  \cup P_2 \preceq P_1 \cup P_2 = P\). This is a contradiction to our
  assumption that \(P\) is not dominated.

  In the second case, for some path \(P_2' \in \mathcal{P}_{i,j}\), we have
  \(P_2' \preceq P_2\). Since \(\tau_{P_2'} \le \tau_{P_2}\) implies
  \(\tau_{P_1\cup P_2'} \le \tau_{P_1 \cup P_2}\) and since
  \(\tau_{(P_1 \cup P_2)\cap Q}' = \tau_{(P_1 \cup P_2')\cap Q}'\), we
  have \(P_1 \cup P_2' \preceq P_1 \cup P_2 = P\). This is a
  contradiction to our assumption that \(P\) is not dominated.
\end{proof}

After computing the sets \(\mathcal{P}_{i, j}\) as in
\cref{sec:1d-sap-dp}, it remains to compute \(\Theta(q^2)\) reduced
joins and \(\Theta(q^2)\) reduced unions between reduced sets of
\(Q\)-paths. Then, it only remains to score the result
\(\mathcal{P}_q\). The shortest path computations from
\cref{sec:1d-sap-dp} use \(k\) criteria where \(k\) is the
Pareto-dimension of the cost functions.  The reduced joins and unions
are with respect to \(k+\ell\) criteria, where \(\ell\) is the Pareto
dimension of the derivatives of the cost functions.

\begin{theorem}
  SAP with Pareto-conform psychological model and cost functions with
  Pareto dimension \(k\) whose derivatives have Pareto dimension
  \(\ell\) reduces to solving \(q\) multi-target
  shortest path problems with \(k\) parameters, executing
  \(\Theta(q^2)\) reduced join and union operations with respect to
  \(k+\ell\) criteria between reduced sets of \(Q\)-paths, and scoring
  the result \(\mathcal{P}_{q}\).
\end{theorem}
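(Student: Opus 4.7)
The proof proceeds by assembling the ingredients developed in the preceding subsections and bookkeeping the operation counts and criteria dimensions.

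First, I would invoke the construction from \cref{sec:1d-sap-dp}: delete the edges of $Q$ from the graph and, for each source $v_i \in Q$, run one multi-criteria multi-target shortest path computation to produce $\mathcal P_{i,j}$ for all $j > i$. Because every path in $\mathcal P_{i,j}$ is disjoint from $Q$, dominance between two such paths reduces to $\tau_{P_1} \le \tau_{P_2}$, which by the definition of Pareto dimension is captured by $k$ criteria using the Pareto representation $p$. This accounts for the $q$ multi-target shortest-path computations with $k$ criteria asserted in the theorem.

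Next, I would carry out the dynamic program exactly as described in the paragraph preceding the theorem: set $\mathcal P_1 = \{(v_1)\}$ and, for each $j = 2,\dots,q$, form the reduced union over $i<j$ of the reduced joins of $\mathcal P_i$ with $\mathcal P_{i,j}$, together with the reduced join of $\mathcal P_{j-1}$ with $\{(v_{j-1},v_j)\}$. For the $j$-th step this is $j$ reduced joins and one reduced union of $j$ reduced sets, so over all $j$ we perform $\Theta(q^2)$ joins and $\Theta(q^2)$ unions. Since the $Q$-paths in these sets may share edges with $Q$, the relevant dominance is the full $\preceq$, which by the discussion following \cref{def:pareto-conformity} reduces to Pareto dominance of the concatenation of $p(\tau_P)$ and $p'(\tau'_{P\cap Q})$; hence the joins and unions are carried out with respect to $k+\ell$ criteria, using additivity of $p$ and $p'$ to update representations when concatenating paths.

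Finally, I would invoke the preceding restatable lemma to conclude that an optimal alternative route lies in $\mathcal P_q$, so it suffices to compute $\Tcost_P$ for each $P \in \mathcal P_q$ and return the minimizer; this is precisely the scoring step. Since no new computation beyond these three phases is needed, the stated reduction follows.

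The only mildly delicate step is the bookkeeping that the joins and unions can be implemented purely on the $(k+\ell)$-dimensional Pareto representations: correctness relies on $p(\tau_1+\tau_2)=p(\tau_1)+p(\tau_2)$ and the analogous identity for $p'$, so concatenating paths corresponds to adding their vectors, and eliminating dominated elements from a reduced union or join corresponds exactly to eliminating Pareto-dominated vectors. Given this, counting the $q$ shortest-path calls and $\Theta(q^2)$ reduced operations is straightforward, and the main obstacle — the correctness of scoring only $\mathcal P_q$ — is already discharged by the preceding lemma.
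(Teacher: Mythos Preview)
Your proposal is correct and mirrors the paper's own argument: the paper likewise computes the sets $\mathcal P_{i,j}$ via $q$ multi-target $k$-criteria shortest-path runs (from \cref{sec:1d-sap-dp}), runs the dynamic program with $\Theta(q^2)$ reduced joins and unions using $k+\ell$ criteria, and invokes the preceding lemma to justify that scoring $\mathcal P_q$ suffices. The only minor imprecision is your phrasing ``one reduced union of $j$ reduced sets'' followed by ``$\Theta(q^2)$ unions''---the paper implicitly treats the reduced union of $j$ sets as $j-1$ binary reduced unions, which is what makes the total $\Theta(q^2)$.
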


\section{Psychological Models and Pareto-Conformity}\label{sec:psychmods}

In this section, we formally define the models mentioned in
\cref{subsec:control_rate} and show their Pareto-conformity.  After
considering the System Optimum Model, we define the Quotient Model,
which is a generalization of the User Equilibrium
Model and the Linear Model. We give conditions under
which a Quotient Model is Pareto-conform and thereby prove that the
User Equilibrium Model and the Linear Model are both Pareto-conform.


The \emph{System Optimum Model} assumes that agents distribute
optimally, i.e., $x_P \in [0, d]$ minimizes $\Tcost_P(x_P)$.  We get
that $P_1 \preceq P_2$ implies
$\Tcost_{P_1}(x) \le \Tcost_{P_2}(x)$ for each $x \in [0, d]$.

\begin{restatable}{theorem}{so-pareto-conf}
    The System Optimum Model is Pareto-conform.
\end{restatable}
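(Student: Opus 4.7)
The plan is to first show the pointwise inequality $\Tcost_{P_1}(x) \le \Tcost_{P_2}(x)$ for every $x \in [0,d]$ whenever $P_1 \preceq P_2$, and then exploit the definition of the System Optimum: if $x_{P_2}^\ast$ is the minimizer of $\Tcost_{P_2}$, pointwise dominance gives $\Tcost_{P_1} = \min_{x} \Tcost_{P_1}(x) \le \Tcost_{P_1}(x_{P_2}^\ast) \le \Tcost_{P_2}(x_{P_2}^\ast) = \Tcost_{P_2}$, which is exactly Pareto-conformity.

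For the pointwise step, I would rewrite \eqref{eq:cost-P} using $\tau_{P_i\setminus Q} = \tau_{P_i} - \tau_{P_i\cap Q}$ and $\tau_{Q\setminus P_i} = \tau_Q - \tau_{P_i\cap Q}$, obtaining
\[
\Tcost_{P_i}(x) \;=\; x\,\tau_{P_i}(x) \;+\; (d-x)\,\tau_Q(d-x) \;+\; h_i(x),
\]
where $h_i(x) := -x\,\tau_{P_i\cap Q}(x) - (d-x)\,\tau_{P_i\cap Q}(d-x) + d\,\tau_{P_i\cap Q}(d)$. The middle term does not depend on $i$, and the hypothesis $\tau_{P_1}\le \tau_{P_2}$ together with $x\ge 0$ gives $x\,\tau_{P_1}(x) \le x\,\tau_{P_2}(x)$, so it remains to prove $h_1 \le h_2$. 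Setting $g := \tau_{P_1\cap Q} - \tau_{P_2\cap Q}$, the derivative assumption $\tau_{P_1\cap Q}' \le \tau_{P_2\cap Q}'$ says $g' \le 0$, so $g$ is non-increasing and hence $g(x) \ge g(d)$ and $g(d-x) \ge g(d)$ for every $x\in[0,d]$. Multiplying these by the non-negative weights $x$ and $d-x$ respectively and adding gives $x\,g(x) + (d-x)\,g(d-x) \ge d\,g(d)$, which rearranges precisely to $h_1(x) - h_2(x) \le 0$.

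The main obstacle is the shared-edge contribution $h_i$: because edges in $P_i \cap Q$ carry the full demand $d$ regardless of $x$, while edges in $P_i \setminus Q$ and $Q \setminus P_i$ carry the split flows $x$ and $d-x$, the two cost functions cannot be compared edge-by-edge. The derivative condition built into the definition of $\preceq$ is calibrated exactly so that $g$ is monotone and the weighted-sum inequality above goes through; with that in hand the System Optimum step is a one-line minimization argument.
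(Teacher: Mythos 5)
Your proposal is correct and follows essentially the same route as the paper: both establish the pointwise bound $\Tcost_{P_1}(x) \le \Tcost_{P_2}(x)$ by splitting off the shared-edge contribution and using that the derivative condition makes $\tau_{P_1\cap Q} - \tau_{P_2\cap Q}$ non-increasing, and then conclude via the same one-line minimization argument. Your weighted-sum inequality $x\,g(x) + (d-x)\,g(d-x) \ge d\,g(d)$ is just the sum of the two term-by-term bounds the paper checks separately.
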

\begin{proof}
  Let two alternative paths $P_1,P_2$ and an original path $Q$ be
  given with $P_1\preceq P_2$, i.e., $\tau_{P_1}\leq\tau_{P_2}$ and
  $\tau'_{P_1\cap Q}\leq\tau'_{P_2\cap Q}$.  The core of the proof
  is to show that \(\Tcost_{P_1}(x) \leq \Tcost_{P_2}(x)\) for all
  \(x\in[0, d]\).  Then, for the minima $x_{P_1}, x_{P_2} \in [0, d]$
  of $\Tcost_{P_1}(x)$ and $\Tcost_{P_2}(x)$, respectively, we get
  \(\Tcost_{P_1}=\Tcost_{P_1}(x_{P_1})\leq\Tcost_{P_1}(x_{P_2})\leq
  \Tcost_{P_2}(x_{P_2})=\Tcost_{P_2}\), which proves
  Pareto-conformity.

  It remains to show \(\Tcost_{P_1}(x) \leq \Tcost_{P_2}(x)\) for all
  \(x\in[0, d]\).  Starting with \(\Tcost_{P}(x)\) as given in~\Cref{eq:cost-P}, we obtain
  \begin{align*}
    \Tcost_{P}(x)
    &= x\cdot \tau_{P\setminus Q}(x) + (d-x)\cdot\tau_{Q\setminus
      P}(d-x)+d\cdot\tau_{P \cap Q} (d)\\
    &= x\cdot \left(\tau_{P\setminus Q}(x) + \tau_{P \cap Q}(d)\right)
      + (d - x)\cdot\left(\tau_{Q\setminus P}(d-x) + \tau_{P \cap
      Q}(d)\right).
  \end{align*}
  As $\tau_P(x) = \tau_{P\setminus Q}(x) + \tau_{P \cap Q}(x)$ and
  $\tau_Q(d - x) = \tau_{Q\setminus P}(d - x) + \tau_{P \cap Q}(d -
  x)$ this yields
  \begin{equation*}
    \Tcost_{P}(x) = x\big(\underbracket{\tau_{P}(x)}_{(i)} + \underbracket{\tau_{P \cap Q}(d) - \tau_{P \cap Q}(x)}_{(ii)}\big) + (d - x)\big(\tau_{Q}(d-x) + \underbracket{\tau_{P \cap Q}(d) - \tau_{P \cap Q}(d - x)}_{(iii)}\big).
  \end{equation*}
  When going from $\Tcost_{P_1}$ to $\Tcost_{P_2}$, the three parts
  $(i)$, $(ii)$ and $(iii)$ change.  As $P_1 \preceq P_2$, we directly
  get \(\tau_{P_1}(x) \leq \tau_{P_2}(x)\) for $(i)$.  Similarly,
  \(\tau_{P_1\cap Q}'\leq \tau_{P_2\cap Q}'\), implies that $(ii)$ and
  $(iii)$ for $P_1$ are upper-bounded by the corresponding terms for
  $P_2$.  Thus, $\Tcost_{P_1}(x) \le \Tcost_{P_2}(x)$.
\end{proof}


\noindent For the \emph{Quotient Model}, let $c(x)$ be
non-decreasing, non-negative on $[0, d]$ with \(c(d)>0\).
If
\begin{equation}
  \label{eq:general-model-def}
  \frac{\tau_{Q\setminus P}(d - x) + \tau_{P\cap
      Q}(d)}{\tau_{P\setminus Q}(x) + \tau_{P\cap Q}(d)} = c(x)
\end{equation}
has a solution in $[0, d]$, it is unique for the following reason.
The numerator and denominator are the cost of $Q$ and $P$, which are
decreasing and increasing in $x$, respectively.  Thus, the
quotient is decreasing, while $c(x)$ is non-decreasing, which makes
the solution unique.  The Quotient Model sets $x_P$ to this unique
solution if it exists.
If no solution exists, then the left-hand side is either smaller or
larger than $c(x)$ for every $x \in [0, d]$, in which case we set
$x_P = 0$ or $x_P = d$, respectively.  This is the natural choice, as
$x_P = 0$ and $x_P = d$ maximizes and minimizes the left-hand side,
respectively.
We note that $c$ specifies how conservative the agents are.  If
$c(x) = 1$, the agents distribute on $Q$ and $P$ such that both paths
have the same cost.  If $c$ is smaller, then agents take the
alternative route, if it is not too much longer.

Recall from~\Cref{eq:cost-P} that the cost function
$\Tcost_P(x)$ is a combination of the three functions
$\tau_{P \setminus Q}$, $\tau_{Q \setminus P}$, and
$\tau_{P \cap Q}$.  If~\Cref{eq:general-model-def} has a
solution $x_P$, we know how $\tau_{P \setminus Q}(x)$ and
$\tau_{Q \setminus P}(x)$ relate to each other at $x = x_P$.  In other
words, solving~\Cref{eq:general-model-def} for
$\tau_{P \setminus Q}(x_P)$ or $\tau_{Q \setminus P}(x_P)$ and
replacing their occurrence in $\Tcost_P = \Tcost_P(x_P)$ with the
result lets us eliminate $\tau_{Q \setminus P}$ or
$\tau_{P \setminus Q}$, respectively, from $\Tcost_P$.  We do this in
the following two lemmas, which additionally take the special cases
$x_P = 0$ and $x_P = d$ into account.

\begin{restatable}{lemma}{general-model-Cp-with-P}\label{lem:general-model-Cp-with-P}
  Let $g_P(x) = (d - x)\cdot c(x) + x$.  Then
  $\Tcost_P \le g_P(x_P) \cdot \left(\tau_{P \setminus Q}(x_P) +
    \tau_{P \cap Q}(d)\right)$.  If $x_P > 0$, then equality holds.
\end{restatable}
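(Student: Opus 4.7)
The plan is to start from the expression for $\Tcost_P(x)$ given in~\Cref{eq:cost-P}, evaluated at $x = x_P$, and regroup the term $d \cdot \tau_{P\cap Q}(d) = x_P \cdot \tau_{P\cap Q}(d) + (d - x_P) \cdot \tau_{P\cap Q}(d)$ so that one copy joins the $\tau_{P\setminus Q}$ term and the other joins the $\tau_{Q\setminus P}$ term. This yields the clean form
\[
  \Tcost_P \;=\; x_P \cdot A \;+\; (d - x_P) \cdot B,
\]
where I abbreviate $A := \tau_{P\setminus Q}(x_P) + \tau_{P\cap Q}(d)$ and $B := \tau_{Q\setminus P}(d - x_P) + \tau_{P\cap Q}(d)$. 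Note that $A$ and $B$ are exactly the denominator and numerator of the left-hand side of~\Cref{eq:general-model-def} at $x = x_P$. Rearranging the desired inequality $\Tcost_P \le g_P(x_P) \cdot A = \bigl((d - x_P) c(x_P) + x_P\bigr) A$ reduces it to
\[
  (d - x_P)\bigl(B - c(x_P)\,A\bigr) \;\le\; 0,
\]
with equality precisely when the left-hand side vanishes.

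Next I carry out a three-way case split on $x_P$ according to which branch of the Quotient Model's definition is active. \textbf{(i)} If $x_P \in (0, d)$, then by construction $x_P$ is the (unique) solution of~\Cref{eq:general-model-def}, so $B / A = c(x_P)$ and the bracketed factor vanishes, giving equality. \textbf{(ii)} If $x_P = d$, the leading factor $d - x_P$ is zero, giving equality. \textbf{(iii)} If $x_P = 0$, either the equation is satisfied at $0$ (equality again), or the model defaulted to $x_P = 0$ because the left-hand side of~\Cref{eq:general-model-def} was strictly smaller than $c(x)$ on the entire interval $[0, d]$; evaluated at $x = 0$ this reads $B / A < c(0)$, so $B - c(0)\,A < 0$, and multiplication by the positive factor $d - 0 = d$ preserves the strict inequality.

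Collecting the three cases yields the claimed inequality $\Tcost_P \le g_P(x_P) \cdot A$, and inspecting when equality was lost shows that a strict gap appears only in the degenerate subcase of~(iii), so whenever $x_P > 0$ equality holds, as claimed.

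The only step that needs care is the bookkeeping in case~(iii): one has to match the two fallback directions of the Quotient Model (``LHS too small everywhere'' versus ``LHS too large everywhere'') to the correct endpoint $x_P \in \{0, d\}$, and verify that the endpoint chosen there is exactly the one that makes the sign of $B - c(x_P) A$ compatible with the inequality — which is automatic once one observes that the ``too large'' fallback forces $x_P = d$ and is therefore killed by the $(d - x_P)$ factor rather than by the sign of the bracket. No calculation beyond this case analysis is needed.
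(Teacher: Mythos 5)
Your proof is correct and follows essentially the same route as the paper's: both regroup $d\cdot\tau_{P\cap Q}(d)$ to write $\Tcost_P = x_P A + (d-x_P)B$ with $A$ and $B$ the denominator and numerator of the Quotient Model equation, and then case-split on whether $x_P$ is an interior solution, $d$, or $0$ (with the fallback branch giving the one-sided inequality). Your reduction to the sign of $(d-x_P)\bigl(B - c(x_P)A\bigr)$ is just a tidier packaging of the paper's explicit substitution of $\tau_{Q\setminus P}(d-x_P)$ into the cost function.
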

\begin{proof}
  First assume $x_P = 0$.  Recall that in this case the left-hand side
  of~\Cref{eq:general-model-def} is at most its right-hand
  side.  Thus, for $x_P = 0$ we obtain
  \begin{equation*}
    \Tcost_P = \Tcost_P(0) = d\cdot \left(\tau_{Q\setminus P}(d) + \tau_{Q\cap P}(d)\right)
    \le d\cdot c(0)\cdot (\tau_{P \setminus Q}(0) + \tau_{Q\cap P}(d)),
  \end{equation*}
  which proves the claim for $x_P = 0$.
  For all other cases, we have to show equality.  First assume
  $x_P = d$.  Then the claim simplifies to
  $\Tcost_P = d\cdot (\tau_{P\setminus Q}(d) + \tau_{P\cap Q}(d))$,
  which is true.

  It remains to consider $0 < x_P < d$, in which case $x_P$ is a
  solution of~\Cref{eq:general-model-def}.  Solving~\Cref{eq:general-model-def} for
  $\tau_{Q\setminus P}(d - x_P)$ gives
  \begin{equation*}
    \tau_{Q\setminus P}(d - x_P)
    = c(x_P)\cdot \tau_{P \setminus Q}(x_P) + (c(x_P) - 1) \cdot \tau_{P \cap Q}(d),
  \end{equation*}
  and plugging it into the cost function $\Tcost_P = \Tcost_P(x_P)$ in~\Cref{eq:cost-P} yields
  \begin{align*}
    \Tcost_P
    &= x_P\cdot \tau_{P\setminus Q}(x_P) + (d - x_P) \cdot \tau_{Q
      \setminus P}(d - x_p) + d \cdot \tau_{P \cap Q}(d)\\
    &= (x_P + (d - x_P)\cdot c(x_P))\cdot \tau_{P\setminus Q}(x_P) +
      (d + (d - x_P)\cdot (c(x_P) - 1))\cdot \tau_{P \cap Q}(d)\\
    &= ((d - x_P)\cdot c(x_P) + x_P)\cdot \left(\tau_{P\setminus
      Q}(x_P) + \tau_{P \cap Q}(d)\right),
  \end{align*}
  which proves the claim.
\end{proof}

We note that \(c(x_P)>0\) holds for the following reason. For
\(x_P=d\) this is true by definition. For $x_P < d$, the left-hand
side of~\Cref{eq:general-model-def} is equal to its right-hand side or
less (in which case \(x_P=0\)).  As the right-hand side is \(c(x_P)\)
and the left-hand side is positive, we get $c(x_P) > 0$. Thus it is
fine to divide by \(c(x_P)\) in the following lemma.

\begin{restatable}{lemma}{general-model-Cp-with-Q}\label{lem:general-model-Cp-with-Q}
  Let $g_Q(x) = d + x / c(x) - x$.  Then
  $\Tcost_P \le g_Q(x_P) \cdot \left(\tau_{Q \setminus P}(d - x_P) +
    \tau_{P \cap Q}(d)\right)$. If $x_P < d$, then equality holds.
\end{restatable}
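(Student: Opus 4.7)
The proof will mirror the structure of \Cref{lem:general-model-Cp-with-P}, but with the roles of $\tau_{P\setminus Q}$ and $\tau_{Q\setminus P}$ interchanged: instead of eliminating $\tau_{Q\setminus P}(d-x_P)$ from $\Tcost_P$, I would solve~\Cref{eq:general-model-def} for $\tau_{P\setminus Q}(x_P)$ and substitute it into~\Cref{eq:cost-P}. Accordingly, the case analysis flips: now $x_P = 0$ is the equality case that falls out directly from the definition of $\Tcost_P(0)$ together with $g_Q(0) = d$, whereas $x_P = d$ becomes the boundary case where only the inequality holds.

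For the $x_P = d$ case, I would use that when the model sets $x_P = d$, the left-hand side of~\Cref{eq:general-model-def} is at least $c(d)$ for every $x \in [0, d]$, which rearranged at $x = d$ gives
\begin{equation*}
  \tau_{P\setminus Q}(d) + \tau_{P \cap Q}(d) \;\le\; \frac{\tau_{Q\setminus P}(0) + \tau_{P\cap Q}(d)}{c(d)}.
\end{equation*}
Multiplying by $d = g_Q(d)\cdot c(d)$ and observing that $\Tcost_P(d) = d\cdot(\tau_{P\setminus Q}(d) + \tau_{P\cap Q}(d))$ produces exactly the claimed inequality. The positivity $c(d) > 0$ is guaranteed by the hypothesis on $c$, so the division is legal.

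For the interior case $0 < x_P < d$, $x_P$ is a genuine solution of~\Cref{eq:general-model-def}, so solving for $\tau_{P\setminus Q}(x_P)$ yields
\begin{equation*}
  \tau_{P\setminus Q}(x_P) = \frac{\tau_{Q\setminus P}(d - x_P)}{c(x_P)} + \left(\frac{1}{c(x_P)} - 1\right)\tau_{P \cap Q}(d).
\end{equation*}
Substituting this into $\Tcost_P = x_P\,\tau_{P\setminus Q}(x_P) + (d-x_P)\,\tau_{Q\setminus P}(d-x_P) + d\,\tau_{P\cap Q}(d)$ and collecting the coefficient of $\tau_{Q\setminus P}(d-x_P)$ gives $x_P/c(x_P) + (d - x_P) = g_Q(x_P)$, while the coefficient of $\tau_{P \cap Q}(d)$ simplifies to $x_P/c(x_P) - x_P + d = g_Q(x_P)$ as well. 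Hence $\Tcost_P = g_Q(x_P)\cdot(\tau_{Q\setminus P}(d-x_P) + \tau_{P\cap Q}(d))$, i.e., equality holds, which (together with the $x_P = d$ case for the inequality) completes the proof.

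The main obstacle is really just bookkeeping: keeping track of which direction~\Cref{eq:general-model-def} fails in the boundary case (at $x_P = d$ the quotient is too large, which translates to an upper bound on $\tau_{P\setminus Q}(d) + \tau_{P\cap Q}(d)$ in the right direction) and checking that the coefficients of $\tau_{Q\setminus P}(d-x_P)$ and $\tau_{P\cap Q}(d)$ coincide after substitution, so that $g_Q(x_P)$ factors out cleanly. Both simplifications are direct analogues of the manipulations in the proof of \Cref{lem:general-model-Cp-with-P}.
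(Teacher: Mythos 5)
Your proposal is correct and follows essentially the same route as the paper's proof: the same case split ($x_P=0$ giving equality directly, $x_P=d$ giving the inequality from the one-sided failure of \Cref{eq:general-model-def}, and the interior case solving for $\tau_{P\setminus Q}(x_P)$ and substituting into \Cref{eq:cost-P} so that $g_Q(x_P)$ factors out of both coefficients). The paper itself presents this as the mirror image of \Cref{lem:general-model-Cp-with-P}, exactly as you describe.
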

\begin{proof}
  The proof is similar to that of~\Cref{lem:general-model-Cp-with-P}.  First, if $x_P = d$, then
  the right-hand side of~\Cref{eq:general-model-def} is at
  most its left-hand side.  Thus, for $x_P = d$, we obtain
  \begin{equation*}
    \Tcost_P = \Tcost_P(d) = d\cdot\left(\tau_{P\setminus Q}(d) +
      \tau_{P\cap Q}(d)\right)
    \le \frac{d}{c(d)}\cdot \left(\tau_{Q\setminus P}(0) + \tau_{P\cap
      Q}(d)\right),
  \end{equation*}
  which proves the claim for $x_P = d$.
  For all other cases, we have to show equality. Now, first assume
  $x_P = 0$.  Then the claim simplifies to
  $\Tcost_P = d\cdot(\tau_{Q\setminus P}(d) + \tau_{P\cap Q}(d))$,
  which is true.

  It remains to consider $0 < x_P < d$, in which case $x_P$ is a
  solution of~\Cref{eq:general-model-def}.  Solving~\Cref{eq:general-model-def} for $\tau_{P\setminus Q}(x_P)$
  gives
  \begin{equation*}
    \tau_{P\setminus Q}(x_P) = \frac{1}{c(x_P)} \cdot \tau_{Q\setminus
      P}(d - x_P) + \left(\frac{1}{c(x_P)} - 1\right) \cdot \tau_{P \cap Q}(d),
  \end{equation*}
  and plugging it into the cost function $\Tcost_P = \Tcost_P(x_P)$ in~\Cref{eq:cost-P} yields
  \begin{align*}
    \Tcost_P
    &= x_P\cdot \tau_{P\setminus Q}(x_P) + (d - x_P) \cdot \tau_{Q
      \setminus P}(d - x_P) + d \cdot \tau_{P \cap Q}(d)\\
    &= \left(d - x_P + x_P \cdot \frac{1}{c(x_P)}\right) \cdot \tau_{Q
      \setminus P}(d - x_P) + \left(d + x_P \cdot
      \left(\frac{1}{c(x_P)} - 1\right)\right) \cdot \tau_{P \cap
      Q}(d)\\
    &= \left(d + \frac{x_P}{c(x_P)} - x_P\right) \cdot \left(\tau_{Q
      \setminus P}(d - x_P) + \tau_{P \cap Q}(d)\right),
  \end{align*}
  which proves the claim.
\end{proof}

The following lemma provides the core inequalities we need when
comparing the cost of two alternative paths.  Note how the
inequalities in parts~\ref{itm:psy-model-core-inequalities-1-le-2} and~\ref{itm:psy-model-core-inequalities-1-ge-2} of the lemma resemble the
representation of the cost $\Tcost_P$ in~\Cref{lem:general-model-Cp-with-P} and~\Cref{lem:general-model-Cp-with-Q}, respectively.

\begin{restatable}{lemma}{psy-model-core-inequalities}\label{lem:psy-model-core-inequalities}
  Let $P_1$ and $P_2$ be alternative paths with $P_1 \preceq P_2$ and
  let $x_1, x_2 \in [0, d]$.  Then
  \begin{enumerate}
  \item\label{itm:psy-model-core-inequalities-1-le-2}
    $\tau_{P_1 \setminus Q}(x_1) + \tau_{P_1 \cap Q}(d) \le
    \tau_{P_2 \setminus Q}(x_2) + \tau_{P_2 \cap Q}(d)$ if
    $x_1 \le x_2$, and
  \item\label{itm:psy-model-core-inequalities-1-ge-2}
    $\tau_{Q\setminus P_1}(d - x_1) + \tau_{P_1 \cap Q}(d) \le
    \tau_{Q\setminus P_2}(d - x_2) + \tau_{P_2 \cap Q}(d)$ if
    $x_1 \ge x_2$.
  \end{enumerate}
\end{restatable}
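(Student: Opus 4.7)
The plan is to reduce both inequalities to the derivative condition $\tau_{P_1\cap Q}'\le\tau_{P_2\cap Q}'$ in the definition of $\preceq$, using monotonicity of the underlying edge cost functions to line up the arguments on the two sides. Throughout, I rely on the fact that $\tau_{P_i\setminus Q}$ and $\tau_{Q\setminus P_i}$ are monotonically non-decreasing (as sums of monotone edge cost functions) and on the decompositions $\tau_{P_i}=\tau_{P_i\setminus Q}+\tau_{P_i\cap Q}$ and $\tau_Q=\tau_{Q\setminus P_i}+\tau_{P_i\cap Q}$.

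For part~\ref{itm:psy-model-core-inequalities-1-le-2}, since $x_1\le x_2$, monotonicity gives $\tau_{P_1\setminus Q}(x_1)\le\tau_{P_1\setminus Q}(x_2)$, so it suffices to prove the claim at $x_1=x_2=:x$. After substituting $\tau_{P_i\setminus Q}(x)=\tau_{P_i}(x)-\tau_{P_i\cap Q}(x)$ on both sides and applying $\tau_{P_1}(x)\le\tau_{P_2}(x)$, the goal reduces to
\[
\tau_{P_1\cap Q}(d)-\tau_{P_1\cap Q}(x)\;\le\;\tau_{P_2\cap Q}(d)-\tau_{P_2\cap Q}(x),
\]
which I would then write as $\int_x^d\tau_{P_i\cap Q}'(t)\,dt$ on each side and conclude via $\tau_{P_1\cap Q}'\le\tau_{P_2\cap Q}'$.

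For part~\ref{itm:psy-model-core-inequalities-1-ge-2}, the argument is symmetric. Now $x_1\ge x_2$ gives $d-x_1\le d-x_2$, so monotonicity of $\tau_{Q\setminus P_1}$ lets me again reduce to $x_1=x_2=:x$. Substituting $\tau_{Q\setminus P_i}(d-x)=\tau_Q(d-x)-\tau_{P_i\cap Q}(d-x)$, the common $\tau_Q(d-x)$ term cancels, and the inequality collapses to
\[
\tau_{P_1\cap Q}(d)-\tau_{P_1\cap Q}(d-x)\;\le\;\tau_{P_2\cap Q}(d)-\tau_{P_2\cap Q}(d-x),
\]
which once more follows by integrating $\tau_{P_1\cap Q}'\le\tau_{P_2\cap Q}'$ from $d-x$ to $d$. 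Notably, part~\ref{itm:psy-model-core-inequalities-1-ge-2} never uses the cost half $\tau_{P_1}\le\tau_{P_2}$ of dominance.

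There is no serious obstacle; the subtle point is to keep the $\tau_{P_i\cap Q}(d)$ terms intact and to push all the work into the fundamental-theorem-of-calculus step on $[x,d]$ (respectively $[d-x,d]$). This is exactly what lets the derivative half of $\preceq$, which at first looks like an odd extra ingredient in the definition of dominance, do precisely the right amount of work on the shared-edges cost function.
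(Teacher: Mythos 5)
Your proof is correct and follows essentially the same route as the paper's: decompose $\tau_{P_i\setminus Q}=\tau_{P_i}-\tau_{P_i\cap Q}$ (resp.\ $\tau_{Q\setminus P_i}=\tau_Q-\tau_{P_i\cap Q}$), apply $\tau_{P_1}\le\tau_{P_2}$ where needed, and convert the derivative half of dominance into $\tau_{P_1\cap Q}(d)-\tau_{P_1\cap Q}(y)\le\tau_{P_2\cap Q}(d)-\tau_{P_2\cap Q}(y)$ by integrating over $[y,d]$. The only cosmetic difference is that you invoke monotonicity on $\tau_{P_1\setminus Q}$ (resp.\ $\tau_{Q\setminus P_1}$) at the start to reduce to equal arguments, whereas the paper applies it to the $P_2$-side at the end; your observation that part~2 never uses $\tau_{P_1}\le\tau_{P_2}$ also matches the paper's argument.
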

\begin{proof}
  Recall that $P_1 \preceq P_2$ means that for all $x \in [0, d]$,
  $\tau_{P_1}(x) \le \tau_{P_2}(x)$ and
  $\tau_{P_1 \cap Q}'(x) \le \tau_{P_2 \cap Q}'(x)$, where $\tau'$
  denotes the derivative of $\tau$.  For the first case $x_1 \le x_2$
  we get
  \begin{align*}
    \tau_{P_1 \setminus Q}(x_1) + \tau_{P_1 \cap Q}(d)
    &= \tau_{P_1 \setminus Q}(x_1) + \tau_{P_1 \cap Q}(x_1) -
      \tau_{P_1 \cap Q}(x_1) + \tau_{P_1 \cap Q}(d)\\
    &= \tau_{P_1}(x_1) + \tau_{P_1 \cap Q}(d) - \tau_{P_1 \cap Q}(x_1),
      \intertext{using that $\tau_{P_1}(x) \le \tau_{P_2}(x)$ and
      $\tau_{P_1 \cap Q}'(x) \le \tau_{P_2 \cap Q}'(x)$}
    &\le \tau_{P_2}(x_1) + \tau_{P_2 \cap Q}(d) - \tau_{P_2 \cap Q}(x_1)\\
    &= \tau_{P_2 \setminus Q}(x_1) + \tau_{P_2 \cap Q}(x_1) +
      \tau_{P_2 \cap Q}(d) - \tau_{P_2 \cap Q}(x_1)\\
    &= \tau_{P_2 \setminus Q}(x_1) + \tau_{P_2 \cap Q}(d).
  \end{align*}
  As $\tau_{P_2 \setminus Q}$ is an increasing function and
  $x_2 \ge x_1$, we obtain
  $\tau_{P_2 \setminus Q}(x_1) \le \tau_{P_2 \setminus Q}(x_2)$, which
  concludes this case.

  The case $x_1 \ge x_2$ works very similar.  We obtain
  \begin{align*}
    \tau_{Q \setminus P_1}(d - x_1) + \tau_{P_1 \cap Q}(d)
    &= \tau_{Q \setminus P_1}(d - x_1) + \tau_{P_1 \cap Q}(d - x_1) -
      \tau_{P_1 \cap Q}(d - x_1) + \tau_{P_1 \cap Q}(d)\\
    &= \tau_{Q}(d - x_1) + \tau_{P_1 \cap Q}(d) - \tau_{P_1 \cap Q}(d -x_1),
      \intertext{and using that $\tau_{P_1 \cap Q}'(x) \le \tau_{P_2
      \cap Q}'(x)$, we get}
    &\le \tau_{Q}(d - x_1) + \tau_{P_2 \cap Q}(d) - \tau_{P_2 \cap Q}(d -x_1)\\
    &= \tau_{Q \setminus P_2}(d - x_1) + \tau_{P_2 \cap Q}(d - x_1) +
      \tau_{P_2 \cap Q}(d) - \tau_{P_2 \cap Q}(d -x_1)\\
    &= \tau_{Q \setminus P_2}(d - x_1) + \tau_{P_2 \cap Q}(d).
  \end{align*}
  As $\tau_{Q \setminus P_2}$ is an increasing function,
  $\tau_{Q \setminus P_2}(d - x)$ decreases in $x$.  Thus, as
  $x_1 \ge x_2$, we obtain
  $\tau_{Q \setminus P_2}(d - x_1) \le \tau_{Q \setminus P_2}(d -
  x_2)$, which concludes the proof.
\end{proof}
Applying the previous three lemmas and dealing with the additional
functions $g_P(x)$ and $g_Q(x)$ in~\Cref{lem:general-model-Cp-with-P} and~\Cref{lem:general-model-Cp-with-Q}, respectively, yields the
following.

\begin{restatable}{theorem}{quotient-model-pareto-conform}\label{thm:quotient-model-pareto-conform}
  The Quotient Model is Pareto-conform if $c(d) \le 1$ and, for all
  $x \in [0, d]$, $c(x)\cdot(1 - c(x)) - x \cdot c'(x) \le 0$.
\end{restatable}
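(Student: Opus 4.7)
The plan is to case-split on whether the Quotient-Model flows satisfy $x_1 := x_{P_1} \le x_{P_2} =: x_2$ or $x_1 \ge x_2$, and in each case to chain \Cref{lem:general-model-Cp-with-P,lem:general-model-Cp-with-Q} with \Cref{lem:psy-model-core-inequalities} via a monotonicity statement for the auxiliary functions $g_P$ and $g_Q$. No relationship between $x_1$ and $x_2$ needs to be proved up front; the case split alone lines up the correct pieces.

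The first preparatory step is to translate the two hypotheses on $c$ into monotonicity of $g_P$ and $g_Q$. A short differentiation gives $g_P'(x) = 1 - c(x) + (d-x)\,c'(x)$, which is non-negative since $c$ is non-decreasing (so $c' \ge 0$) and $c(x) \le c(d) \le 1$; hence $g_P$ is non-decreasing on $[0,d]$. Similarly $g_Q'(x) = \bigl(c(x)(1-c(x)) - x\,c'(x)\bigr)/c(x)^2$, which is non-positive by the second hypothesis; hence $g_Q$ is non-increasing on the region where $c > 0$. That region contains the interval between $x_1$ and $x_2$, since $c$ is non-decreasing and $c(x_1), c(x_2) > 0$ as noted just before \Cref{lem:general-model-Cp-with-Q}. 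Both $g_P$ and $g_Q$ are clearly non-negative on $[0,d]$.

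For the case $x_1 \le x_2$, assume first $x_2 > 0$. Then \Cref{lem:general-model-Cp-with-P} applied as an inequality to $P_1$ and as an equality to $P_2$, combined with monotonicity of $g_P$ and part \ref{itm:psy-model-core-inequalities-1-le-2} of \Cref{lem:psy-model-core-inequalities}, yields
\[
  \Tcost_{P_1} \le g_P(x_1)\bigl(\tau_{P_1\setminus Q}(x_1) + \tau_{P_1 \cap Q}(d)\bigr) \le g_P(x_2)\bigl(\tau_{P_2 \setminus Q}(x_2) + \tau_{P_2 \cap Q}(d)\bigr) = \Tcost_{P_2}.
\]
The case $x_1 \ge x_2$ with $x_2 < d$ is entirely symmetric, using $g_Q$, \Cref{lem:general-model-Cp-with-Q}, and part \ref{itm:psy-model-core-inequalities-1-ge-2} of \Cref{lem:psy-model-core-inequalities}. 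The boundary situations $x_2 = 0$ in the first case and $x_2 = d$ in the second force $x_1 = x_2$; then \Cref{eq:cost-P} evaluated at $0$ or $d$ reduces $\Tcost_{P_i}$ to $d$ times one side of the corresponding inequality from \Cref{lem:psy-model-core-inequalities}, giving the conclusion directly.

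The main obstacle is choosing the right pairing of representation with case. \Cref{lem:general-model-Cp-with-P} gives equality only when $x_P > 0$ and \Cref{lem:general-model-Cp-with-Q} only when $x_P < d$, so in each case one must use the representation that is tight on the $P_2$-side (where a lower bound on $\Tcost_{P_2}$ is needed) and merely an upper bound on the $P_1$-side. With that pairing, the two hypotheses on $c$ turn out to be exactly what is needed to make $g_P$ non-decreasing and $g_Q$ non-increasing in the direction required by each case, which is why the theorem's conditions appear in this particular asymmetric form.
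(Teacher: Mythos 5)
Your proposal is correct and follows essentially the same route as the paper's proof: the same case split on $x_{P_1}\le x_{P_2}$ versus $x_{P_1}\ge x_{P_2}$, the same pairing of \Cref{lem:general-model-Cp-with-P} and \Cref{lem:general-model-Cp-with-Q} with the two parts of \Cref{lem:psy-model-core-inequalities}, and the same translation of the two hypotheses on $c$ into monotonicity of $g_P$ and $g_Q$. The only differences are cosmetic (your direct bound $c(x)\le c(d)\le 1$ for $g_P'\ge 0$ versus the paper's ``minimum at $x=d$'' argument, and a slightly different but equivalent treatment of the boundary cases $x_{P_2}=0$ and $x_{P_2}=d$).
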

\begin{proof}
  Consider two paths $P_1$ and $P_2$ such that $P_1 \preceq P_2$.  We
  want to show $\Tcost_{P_1} \le \Tcost_{P_2}$.

  First, assume the case $x_{P_1} \le x_{P_2}$.  If
  $x_{P_1} = x_{P_2} = 0$, we get $\Tcost_{P_1} = \Tcost_{P_2}$ as all
  traffic is routed over Q for both alternative paths.  Hence, assume
  $x_{P_1} \ge 0$ and $x_{P_2} > 0$.  By~\Cref{lem:general-model-Cp-with-P}, we get
  $\Tcost_{P_1} \le g_P(x_{P_1}) \cdot \left(\tau_{P_1 \setminus
      Q}(x_{P_1}) + \tau_{P_1 \cap Q}(d)\right)$, with
  $g_P(x) = (d - x)\cdot c(x) + x$.  For the second factor, we can use
  Lemma~\ref{lem:psy-model-core-inequalities}.\ref{itm:psy-model-core-inequalities-1-le-2}
  to obtain
  $\tau_{P_1 \setminus Q}(x_{P_1}) + \tau_{P_1 \cap Q}(d) \le
  \tau_{P_2 \setminus Q}(x_{P_2}) + \tau_{P_2 \cap Q}(d)$.  Concerning
  the first factor, $g_P(x)$ is non-decreasing if $g_P'(x) \ge 0$ for
  $x \in [0, d]$.  We get $g_P'(x) = (d - x) \cdot c'(x) - c(x) + 1$.
  As $c(x)$ is non-decreasing, $(d - x)\cdot c'(x)$ and $- c(x)$ have
  their minimum at $x = d$.  Thus, for $x \in [0, d]$,
  $g_P'(x) \ge g_P'(d) = 1 - c(d)$.  As $c(d) \le 1$ is required by
  the theorem, $g_P(x)$ is non-decreasing, which yields
  $g_P(x_{P_1}) \le g_P(x_{P_2})$.  To summarize, we thus get
  \begin{equation*}
    \Tcost_{P_1}
    \le g_P(x_{P_1}) \cdot \left(\tau_{P_1 \setminus Q}(x_{P_1}) +
      \tau_{P_1 \cap Q}(d)\right)
    \le g_P(x_{P_2}) \cdot \left(\tau_{P_2 \setminus Q}(x_{P_2}) +
      \tau_{P_2 \cap Q}(d)\right)
    = \Tcost_{P_2},
  \end{equation*}
  where the last equality holds due to~\Cref{lem:general-model-Cp-with-P} and the fact that
  $x_{P_2} > 0$.

  Secondly, consider the case $x_{P_1} > x_{P_2}$.  We
  use~\Cref{lem:general-model-Cp-with-Q} and hereby get
  $\Tcost_{P_1} \le g_Q(x_{P_1}) \cdot \left(\tau_{Q\setminus P_1}(d -
    x_{P_1}) + \tau_{P_1\cap Q}(d)\right)$, with
  $g(x) = d + x / c(x) - x$.  For the second factor, we can use~\Cref{lem:psy-model-core-inequalities}.\ref{itm:psy-model-core-inequalities-1-ge-2}
  to obtain
  $\tau_{Q\setminus P_1}(d - x_{P_1}) + \tau_{P_1\cap Q}(d) \le
  \tau_{Q\setminus P_2}(d - x_{P_2}) + \tau_{P_2\cap Q}(d)$.
  Concerning the first factor, $g_Q(x)$ is non-increasing if
  $g_Q'(x) \le 0$ for $x \in [0, d]$.  We get
  $g_Q'(x) = (c(x) - x\cdot c'(x))/c^2(x) - 1$, which is at most $0$
  if $c(x) - x\cdot c'(x) - c^2(x) \le 0$, which is required by the
  theorem.  To summarize, we thus get
  \begin{align*}
    \Tcost_{P_1}
    &\le g_Q(x_{P_1}) \cdot \left(\tau_{Q\setminus P_1}(d -
      x_{P_1}) + \tau_{P_1\cap Q}(d)\right)\\
    &\le g_Q(x_{P_2}) \cdot \left(\tau_{Q\setminus P_2}(d - x_{P_2}) +
      \tau_{P_2\cap Q}(d)\right) = \Tcost_{P_2},
  \end{align*}
  where the last equality holds due to~\Cref{lem:general-model-Cp-with-Q} and the fact that
  $x_{P_2} < x_{P_1} \le d$.
\end{proof}

As mentioned in in~\cref{subsec:control_rate}, the \emph{User Equilibrium Model}
sets $x_P$ such that both paths $P$ and $Q$ have the same cost per agent, if
possible.  More formally and in terms of the Quotient Model, we obtain the User
Equilibrium Model by setting $c(x) = 1$ in~\Cref{eq:general-model-def}, which is non-decreasing, non-negative, and satisfies $c(d) > 0$.

\begin{corollary}
  The User Equilibrium Model is Pareto-conform.
\end{corollary}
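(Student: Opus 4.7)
The plan is to directly invoke \Cref{thm:quotient-model-pareto-conform} after verifying that the constant function $c(x) = 1$ meets both its structural and analytical requirements. First, I would check that $c \equiv 1$ is a valid instance of the Quotient Model in the sense required by the definition preceding \Cref{eq:general-model-def}: it is non-decreasing, non-negative on $[0, d]$, and satisfies $c(d) = 1 > 0$. This ensures the User Equilibrium Model is genuinely a special case of the Quotient Model, so that the theorem's conclusion applies once its hypotheses are checked.

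Next I would verify the two numerical conditions of \Cref{thm:quotient-model-pareto-conform}. The first, $c(d) \le 1$, is immediate since $c(d) = 1$. For the second, observe that $c'(x) = 0$ everywhere, so
\begin{equation*}
  c(x)\cdot(1 - c(x)) - x \cdot c'(x) = 1 \cdot (1 - 1) - x \cdot 0 = 0 \le 0
\end{equation*}
for all $x \in [0, d]$. Both hypotheses of the theorem are therefore satisfied, and Pareto-conformity of the User Equilibrium Model follows at once.

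The only point that deserves a brief remark is the interpretation: solving \Cref{eq:general-model-def} with $c(x) = 1$ equates the costs $\tau_Q(d - x) + \tau_{P \cap Q}(d)$ and $\tau_P(x) + \tau_{P \cap Q}(d)$, i.e., $\tau_Q(d - x) = \tau_P(x)$, which is exactly the Wardrop equilibrium condition that agents distribute so that both routes incur equal cost per agent (with the corner cases $x_P = 0$ or $x_P = d$ handled by the Quotient Model's fallback). There is no real obstacle in this corollary; the work is entirely in \Cref{thm:quotient-model-pareto-conform}, and the only thing to be careful about is to explicitly cite that theorem rather than re-derive the inequality chains from \Cref{lem:general-model-Cp-with-P,lem:general-model-Cp-with-Q,lem:psy-model-core-inequalities}.
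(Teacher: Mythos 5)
Your proposal is correct and matches the paper's proof: both simply verify that the constant function $c(x)=1$ satisfies the hypotheses of \Cref{thm:quotient-model-pareto-conform}, namely $c(d) \le 1$ and $c(x)\cdot(1-c(x)) - x\cdot c'(x) = 0 \le 0$. The extra remark on the Wardrop-equilibrium interpretation is a nice sanity check but not needed.
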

\begin{proof}
  To apply~\Cref{thm:quotient-model-pareto-conform}, we have to
  check whether the constant function $c(x) = 1$ satisfies the
  requirements.  Clearly $c(d) \le 1$.  Moreover, $c'(x) = 0$ and thus
  the left-hand side of the last condition resolves to $0$.
\end{proof}
The \emph{Linear Model} is defined by setting $c(x)$ to be an
increasing linear function, i.e., $c(x) = c\cdot x/d$ for $c > 0$.
Note that $c(x)$ is non-decreasing, non-negative and satisfies
$c(d) > 0$.

\begin{corollary}
  The Linear Model with $0 < c \le 1$ is Pareto-conform.
\end{corollary}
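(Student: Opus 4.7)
The plan is to verify the two hypotheses of Theorem~\ref{thm:quotient-model-pareto-conform} for the specific choice $c(x) = c\cdot x/d$ with $0 < c \le 1$, and then simply invoke the theorem. First, I would record that $c(x) = cx/d$ is non-decreasing, non-negative on $[0, d]$, and satisfies $c(d) = c > 0$, so it defines a valid instance of the Quotient Model.

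Next, I would check the two explicit conditions. The first, $c(d) \le 1$, is immediate from $c(d) = c$ and the assumption $c \le 1$. The second condition requires $c(x)(1 - c(x)) - x \cdot c'(x) \le 0$ for all $x \in [0, d]$. Here $c'(x) = c/d$ is constant, and plugging in gives
\[
  c(x)\bigl(1 - c(x)\bigr) - x \cdot c'(x)
  = \frac{cx}{d}\left(1 - \frac{cx}{d}\right) - \frac{cx}{d}
  = -\left(\frac{cx}{d}\right)^2,
\]
which is non-positive on $[0, d]$. Thus both hypotheses of Theorem~\ref{thm:quotient-model-pareto-conform} are satisfied.

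There is no real obstacle here: the proof is a direct substitution and a one-line algebraic simplification. The only subtlety is making sure the assumption $c \le 1$ (rather than, say, $c$ being unrestricted) is genuinely used, and indeed it is needed precisely to secure $c(d) \le 1$; the second condition would hold for any $c > 0$. With both conditions verified, Pareto-conformity of the Linear Model follows immediately from Theorem~\ref{thm:quotient-model-pareto-conform}.
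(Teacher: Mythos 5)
Your proof is correct and matches the paper's argument essentially verbatim: both verify $c(d) = c \le 1$ and simplify $c(x)(1-c(x)) - x\,c'(x)$ to $-(cx/d)^2 \le 0$, then invoke \Cref{thm:quotient-model-pareto-conform}. Your additional remarks on the validity of $c(x)$ as a Quotient Model instance and on where $c \le 1$ is actually needed are accurate but not required.
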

\begin{proof}
  We check the requirements of~\Cref{thm:quotient-model-pareto-conform}. First,
  $c(d) = c \le 1$.  And secondly,
  $c(x) \cdot (1 - c(x)) - x \cdot c'(x) = cx/d - c^2x^2/d^2 - cx/d
  \le 0$.
\end{proof}
We note that the Linear Model with $c \le 1$ is less conservative than
the User Equilibrium Model, i.e., more agents use the alternative path, in
particular if only few agents would use it based on its cost.
Moreover, a lower $c$ makes the model less conservative.

\section{Complexity of SAP}\label{sec:np}
Our algorithms introduced in \Cref{sec:sap-algos} have an exponential worst case running time. In the following we want to show that solving the SAP problem is indeed hard, since the corresponding decision problem, i.e., deciding whether a specific problem instance is
solvable with an overall travel time of at most \(t\), is NP-complete.
\begin{theorem}\label{thm:sap-npc}
	SAP is \textbf{NP}-complete.
\end{theorem}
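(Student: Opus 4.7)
The plan is to argue NP membership and then prove NP-hardness by reduction from \textsc{Partition}.

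\textbf{Membership in NP.} A nondeterministic algorithm guesses an alternative $st$-path $P$, computes the traffic split $x_P$ prescribed by the psychological model, evaluates $\Tcost_P$ via \Cref{eq:cost-P}, and checks that the result is at most~$t$. For each model considered in \Cref{sec:psychmods}, $x_P$ is computable in polynomial time: for the System Optimum it is the minimizer of a univariate polynomial, and for any Quotient Model (including User Equilibrium and Linear) it is the unique solution of \Cref{eq:general-model-def}, obtainable in closed form for polynomial cost functions.

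\textbf{NP-hardness.} I would reduce from \textsc{Partition}. Given positive integers $a_1, \ldots, a_n$ with $\sum_i a_i = 2S$, build a chain of $n$ parallel-edge diamonds between $s = v_0, v_1, \ldots, v_n = t$: between $v_{i-1}$ and $v_i$ place two edges, a top edge $e_i^T$ and a bottom edge $e_i^B$. Let the original route $Q$ use all top edges, so that any $st$-path $P$ is specified by the set $I \subseteq \{1, \ldots, n\}$ of diamonds where it takes the bottom edge. Fix a Pareto-conform psychological model (say, the System Optimum), and assign symmetric edge cost functions---for instance, canonical quadratics with $\tau_{e_i^T}$ and $\tau_{e_i^B}$ engineered so that the contribution of diamond $i$ to $\Tcost_P$ depends only on whether $i \in I$ through the quantity $a_i$---so that $\Tcost_P$ reduces to a strictly convex function of $A_I := \sum_{i \in I} a_i$ and its complement $2S - A_I$. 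By the symmetry of the construction the minimum is attained uniquely at $A_I = S$, and the decision version of SAP with threshold equal to this minimum is solvable iff the \textsc{Partition} instance is. The reduction is polynomial in the input size.

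\textbf{Main obstacle.} The delicate part is engineering the cost functions so that, once the psychological-model-induced split $x_P$ is absorbed into \Cref{eq:cost-P}, the resulting expression $\Tcost_P$ becomes an explicit strictly convex function of $A_I$ whose minimum lies precisely at the balanced point. In particular, the overlap term $d\cdot \tau_{P\cap Q}(d)$ must be handled carefully because it couples $I$ and its complement, and the cost functions must still be monotonically increasing and admit the chosen Pareto-conform model. A symmetric choice of top and bottom coefficients makes $\Tcost_P$ invariant under swapping $I$ with $\{1,\ldots,n\}\setminus I$, forcing any unique extremum to sit at the symmetry point $A_I = 2S - A_I = S$; verifying strict convexity and polynomial encoding size then completes the proof.
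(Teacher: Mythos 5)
Your NP-membership argument is fine, but the hardness reduction has a gap that is not merely ``delicate'' --- in the form you propose it cannot be closed. Fix the diamond chain and write $z_i \in \set{0,1}$ for whether $P$ takes the bottom edge of diamond $i$. For any fixed split $x$, \Cref{eq:cost-P} is \emph{additive over edges}, so
\[
  \Tcost_P(x) \;=\; \sum_i \Bigl[ z_i\bigl(x\,\tau_{e_i^B}(x) + (d-x)\,\tau_{e_i^T}(d-x)\bigr) + (1-z_i)\,d\,\tau_{e_i^T}(d) \Bigr],
\]
which is affine in the vector $(z_1,\dots,z_n)$. Under the System Optimum, $\Tcost_P = \min_{x\in[0,d]}\Tcost_P(x)$ is a pointwise minimum of affine functions of $z$, hence \emph{concave} in $z$ and in particular concave in $A_I$ whenever the construction makes the cost depend on $I$ only through $A_I$. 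A concave function cannot be strictly convex with an interior minimizer at the balanced point $A_I = S$; its minimum over $A_I\in[0,2S]$ sits at an endpoint ($I=\emptyset$ or $I=\set{1,\dots,n}$). Worse, on these instances the optimal path is computable in polynomial time: since the inner minimization over $I$ decouples per diamond for each fixed $x$, the optimum equals $\min_x \sum_i \min\bigl(f_i(x), g_i\bigr)$ with $f_i(x) = x\tau_{e_i^B}(x)+(d-x)\tau_{e_i^T}(d-x)$ and $g_i = d\,\tau_{e_i^T}(d)$, a one-dimensional piecewise-polynomial minimization with $O(n)$ breakpoints. So no choice of monotone polynomial edge costs rescues the Partition reduction under the System Optimum.

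The paper sidesteps exactly this obstruction by making the \emph{psychological model itself} the hardness gadget rather than fixing a natural one. It reduces from \textsc{SubsetSum} on a chain of parallel edges with $\tau_{e_{i,1}}(x)=m_i x$ and $\tau_{e_{i,2}}(x)=m_i$, with $Q$ a single disjoint bypass edge, and defines a model that sends all traffic to $P$ if and only if $\tau_P(x)=wx+s-w$ identically --- i.e., iff the chosen subset sums to exactly $w$ --- and sends none otherwise. This model is degenerate but provably Pareto-conform, because any two distinct alternative cost functions in the construction cross at $x=1$ and hence neither path dominates the other. If you want to salvage your approach, you must either adopt such an adversarial (yet Pareto-conform) model as part of the reduction, or prove hardness for a fixed natural model, which is a substantially stronger claim than the paper makes and which your concavity situation shows cannot be reached via a single linear statistic $A_I$.
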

\begin{proof}
	The problem is in \textbf{NP}, since a suitable alternative path $P$ can be guessed and verified in polynomial time. For showing \textbf{NP}-hardness we reduce from \textsc{SubsetSum}, which is well-known to be NP-complete~\cite{Garey1983-GARCAI-2}. Formally, in \textsc{SubsetSum} we are given a finite set $M \subset \mathbb{N}$ and a target number $w \in \mathbb{N}$ and the problem is to decide whether there exists a subset $M' \subseteq M$ such that $w=\sum_{m\in M'}m$.
	For an arbitrary \textsc{SubsetSum} instance $(M,w)$, let  $m_i$ denote the $i$-th element from $M$, for all $1\leq i \leq |M|$, and let $s = \sum_{m\in M}m$.

	Towards reducing \textsc{SubsetSum} to SAP, we construct for the instance $(M,w)$ the directed graph \(G_{M,w}=(V=\set{v_0, v_1, \ldots, v_{\abs{M}}}, E=\set{e_{i,j}\mid 1\leq i \leq |M| \wedge j\in \{1,2\}}\cup\set{e_Q})\) where, for $1\leq i \leq |M|$ and $j\in \{1,2\}$, edges $e_{i,j}$ connect vertex $v_{i-1}$ to $v_i$ and edge $e_Q$ connects $v_0$ to $v_{|M|}$.  See~\Cref{fig:subsetsum-graph} for an illustration.
	We set the cost functions as follows: \(\tau_{e_{i,1}}(x) = m_i \cdot x\), \(\tau_{e_{i,2}}(x) = m_i\) and \(\tau_{e_Q}(x)=sx + s\), for all \(x\).
	In the corresponding SAP problem on $G_{M,w}$ we search for a path from \(v_0\) to \(v_{\lvert M\vert}\), with the original path \(Q=(e_Q)\). For any \(v_0,v_\lvert M \rvert\)-path \(P\neq Q\) the psychological model \(p\) is defined as \begin{align*}
	p_{\set{P,Q}}(P)=
	\begin{cases}
	1, &\text{if }\forall x\colon \tau_P(x)=wx+s - w;\\
	0, &\text{otherwise.}\end{cases}
	\end{align*}
	We now show that this model is Pareto-conform. Let \(P_1, P_2\neq Q\) be
	\(v_0,v_{\lvert M \rvert}\)-paths. If
	\(\tau_{P_1}=\tau_{P_2}\), then \(\Tcost_{P_1}=\Tcost_{P_2}\). Otherwise,
	because \(\tau_{P_1}(1)=\tau_{P_2}(1)\) and since the functions inter- sect in
	\([0,\infty)\) at exactly one point, without loss of generality \(\tau_{P_1}(0)<\tau_{P_2}(0)\)
	and \(\tau_{P_1}(2)>\tau_{P_2}(2)\). Thus neither \(P_1\preceq P_2\) nor
	\(P_1\succeq P_2\) and the psychological model is Pareto-conform for any demand \(d\ge
	2\).

	We now show that $(M,w)$ is a yes-instance of \textsc{SubsetSum} if and only if
	for demand $d=2$ there is an alternative path to \(Q\), such that the overall
	travel time is strictly less than~\(6s\).
	Let~\(P\) be a \(v_0,v_{\lvert M \rvert}\)-path not using \(e_Q\). We define
	\(\mathbb{I}_P=\set{i \mid e_{i,1}\in P}\). Note that this set uniquely
	identifies \(P\). Then the latency of path $P$ is given by \(\tau_P=\sum_{i\in
		\mathbb{I}_P}m_i x + s -\sum_{i\in \mathbb{I}_P}m_i\).

	Assume $(M,w)$ is a yes-instance of \textsc{SubsetSum}. Let \(M'\subseteq M\)
	such that \(w=\sum_{m\in M'}m\). Let \(P\) be the path with \(\mathbb{I}_P=
	\set{ i \mid m_i \in M'}\). Then \(p_{\set{P,Q}}(P)=1\) and thus the overall
	travel time is
	\(d\cdot\tau_{P}(d)=2(2w+s-w)<6s\) when \(P\) is suggested as alternative to $Q$.

	For the other direction, if there is a path \(P\), such that the overall travel
	time is less than \(6s\) when
	\(P\) is suggested as alternative to $Q$, we know that \(p_{\set{P,Q}}(P)>0\) as
	otherwise the overall travel time would be \(6s=d\tau_Q(d)\). Thus, we must have
	\(p_{\{P,Q\}}(P) = 1\). By definition, this means that \(\sum_{i\in
		\mathbb{I}_P}m_i=w\) and thus there is a subset \(M'\subseteq M\), such that
	\(w=\sum_{m\in M'}m\) and hence $(M,w)$ is a yes-instance of \textsc{SubsetSum}.
	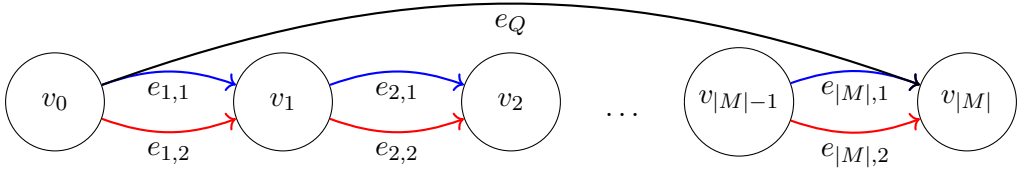
\begin{figure}
		\begin{center}
			\begin{tikzpicture}[auto, thin,
			arc1l/.style={bend left=20},
			arc2l/.style={bend left=20},
			arc1r/.style={bend right=20}]
			\begin{scope}[every node/.style={circle,draw=black,minimum size=1.3cm}]
			\node (0) at (-3,0) {\(v_0\)};
			\node (1) at (0,0) {\(v_1\)};
			\node (2) at (3,0) {\(v_2\)};
			\node (3) at (6,0) {\(v_{\abs{M}-1}\)};
			\node (4) at (9,0) {\(v_{\abs{M}}\)};
			\end{scope}
			\begin{scope}[every edge/.style={draw=blue,thick}]
			\draw[->, arc2l]  (0) edge node[below]{\(e_{1,1}\)} (1);
			\draw[->, arc2l]  (1) edge node[below]{\(e_{2,1}\)} (2);
			\draw[->, arc2l]  (3) edge node[below]{\(e_{\abs{M},1}\)} (4);
			\end{scope}
			\begin{scope}[every edge/.style={draw=red,thick}]
			\draw[->, arc1r]  (0) edge node[below]{\(e_{1,2}\)} (1);
			\draw[->, arc1r]  (1) edge node[below]{\(e_{2,2}\)} (2);
			\draw[->, arc1r]  (3) edge node[below]{\(e_{\abs{M},2}\)} (4);
			\end{scope}
			\begin{scope}[every edge/.style={draw=black,thick}]
			\draw[->, arc1l]  (0) edge node[below]{\(e_Q\)} (4);
			\end{scope}
			\begin{scope}[every edge/.style={draw=white,thick}]
			\draw[->]  (2) edge node[below]{\(\ldots\)} (3);
			\end{scope}
			\end{tikzpicture}
		\end{center}
		\caption{Constructed graph \(G_{M,w}\) for the reduction from \textsc{SubsetSum}.}\label{fig:subsetsum-graph}
	\end{figure}
\end{proof}

It directly follows that our proposed variants D-SAP and $1$D-SAP are also NP-complete.
\begin{corollary}
	D-SAP and 1D-SAP are \textbf{NP}-complete.
\end{corollary}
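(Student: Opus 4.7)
The plan is to observe that the reduction from \textsc{SubsetSum} used in the proof of \Cref{thm:sap-npc} already produces an instance that belongs to the more restricted settings of D-SAP and 1D-SAP, so essentially no new work is needed.

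More concretely, I would first argue NP-membership: for both D-SAP and 1D-SAP, a candidate alternative path $P$ can be guessed and verified in polynomial time (checking disjointness from $Q$, respectively that $P\setminus Q$ is a single connected subpath, together with computing $\Tcost_P$). For NP-hardness, I would revisit the graph $G_{M,w}$ from \Cref{thm:sap-npc}. There the original path is the single edge $Q = (e_Q)$ directly connecting $v_0$ to $v_{\abs{M}}$, while every candidate alternative path uses edges of the form $e_{i,j}$ with $j\in\{1,2\}$. In particular, every alternative $v_0,v_{\abs{M}}$-path $P\neq Q$ shares no edges with $Q$, so it is automatically disjoint from $Q$ and, trivially, $P\setminus Q = P$ is a single connected path, making it 1-disjoint as well.

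Consequently, the same instance $(G_{M,w}, Q, d=2, p)$ together with the same bound $6s$ is a yes-instance of SAP if and only if it is a yes-instance of D-SAP and of 1D-SAP: the set of feasible alternative paths in each of these three problems coincides on this instance. Since the Pareto-conformity of the psychological model $p$ has already been verified in \Cref{thm:sap-npc}, the correctness of the reduction carries over verbatim, and NP-hardness of both variants follows.

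I do not expect any significant obstacle here; the only thing worth double-checking is that the graph construction does not accidentally admit paths that rejoin $Q$ in a way that would require an argument for the 1-disjoint case. Since $Q$ consists of the single edge $e_Q$ and the intermediate vertices $v_1,\dots,v_{\abs{M}-1}$ of the alternative paths do not lie on $Q$ (they are not endpoints of $e_Q$), no alternative path can ``touch and leave'' $Q$ more than once, so 1-disjointness holds without further argument.
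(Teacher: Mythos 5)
Your proposal is correct and matches the paper's own argument: the corollary is proved there by observing that every alternative path in the \textsc{SubsetSum} reduction is automatically disjoint from the single-edge original route $Q=(e_Q)$, so the hardness carries over to both restricted variants. Your additional remarks on NP-membership and on why no path can rejoin $Q$ are just a more explicit spelling-out of the same one-line observation.
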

\begin{proof}
	Every valid solution for SAP needs to be completely disjoint in the proof above. Thus the theorem holds for the case of restricted
	disjointedness as well.
\end{proof}


\section{Empirical Evaluation}\label{sec:evaluation}

\begin{figure}[b]
  \centering
  \includegraphics[width=0.46\linewidth]{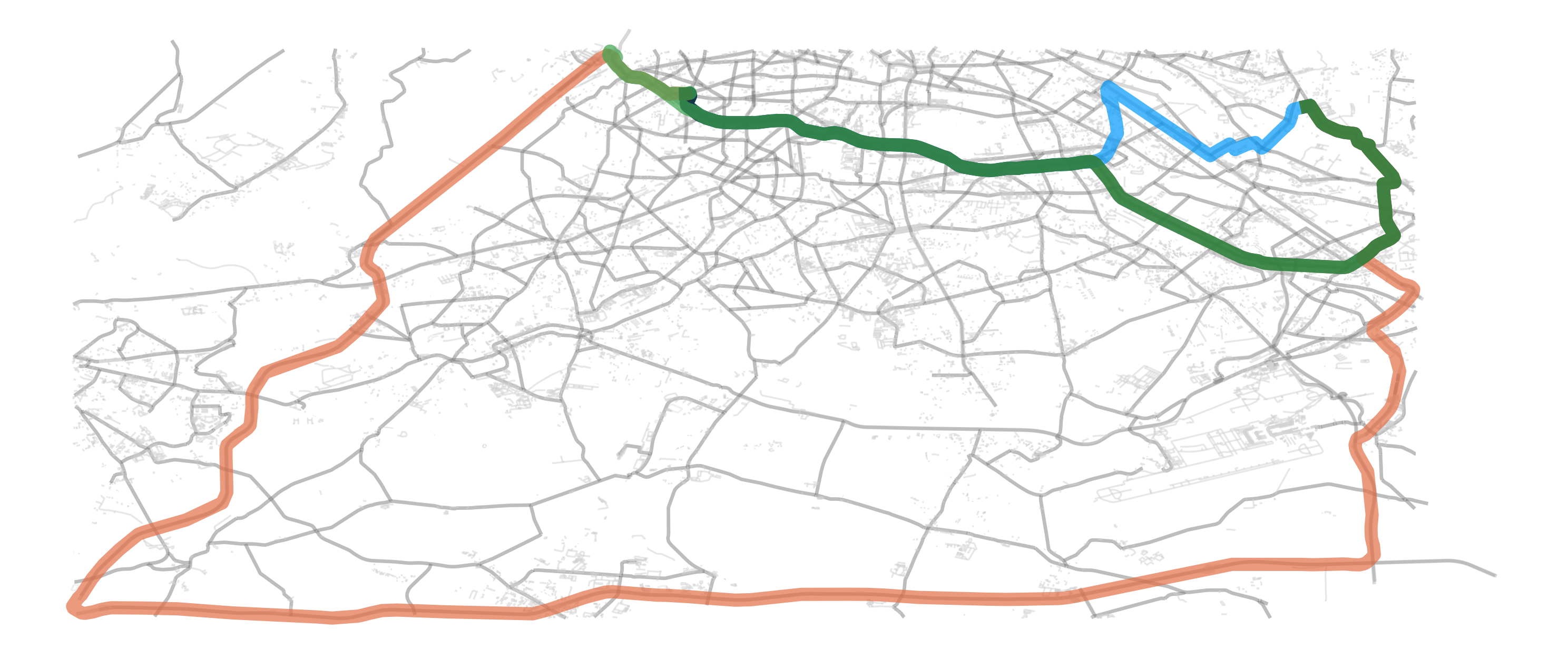}
  \caption{Visualization of example routes: original route (blue),
    optimal alternative routes with respect to the User Equilibrium
    for SAP (green), for 1D-SAP (black), and D-SAP (orange).}
  \label{fig:example-routes}
\end{figure}

In this section we fix implementation details and evaluate the
proposed algorithms.  Our evaluation focuses on the following aspects.
\begin{description}
\item[Performance.] Are the algorithms sufficiently efficient for
  practical problem instances?  How do the different algorithms
  compare in terms of run time?
\item[Strategic Improvement.] How much does strategic routing improve
  the overall travel time?  How does the requirement of disjoint or
  $1$-disjoint alternatives impact this improvement?
\end{description}
Additional evaluation regarding the psychological models can be found
in \Cref{subpar:psychmods}. 
For now, we fix the psychological model to be the User
Equilibrium.  


%
We model cost functions \(\tau_e\) as proposed by the U.S.\ Bureau of
Public Roads~\cite{us1964}, i.e., for parameters
\(\alpha, \beta \geq 0\), we have
$\tau_e(x) = \ell_e/s_e \cdot (1+\alpha (x/c_e)^\beta)$ where \(s_e\),
\(c_e\), and \(\ell_e\) denote free flow speed, capacity and length of
$e$.  We set \(\alpha = 0.15\) and \(\beta = 2\).  Thus, for
appropriate $a$ and $b$, we get canonical cost functions of the form
\(\tau_e(x) = ax^2 +b\) as defined in \Cref{sec:sap-algos}.

For solving the multi-criteria shortest path problem, we implement a
multi-criteria A* variant~\cite{Mandow_DeLaCruz_2008}.  As lower
bound, we use the distances in the parameters \(a\) and \(b\) to
\(t\). These distances are calculated using two runs of Dijkstra's
algorithm. We note that A* solves a multi-target shortest path
problem, which we need for two algorithms; see \cref{sec:1d-sap-dp}.
For calculating the Pareto-frontiers we use the simple cull
algorithm~\cite{Yukish_2004}.

We use the following naming scheme.  We abbreviate the algorithms
from~\Cref{sec:single-altern,subsec:1d_SAP} with SAP and 1D-SAP,
respectively.  We denote the \emph{fewer criteria} (FC) approaches
with D-SAP (\Cref{sec:disjoint-single-alt-path}) 1D-SAP-FC
(\cref{sec:1d-sap-dp}) and SAP-FC (\cref{sec:dynamic-program-sap}).
To evaluate the strategic improvement, we compare them to the solution
of proposing only the shortest path to all agents, assuming either one
single agent (\(1\)-SP) or \(d\) agents (\(d\)-SP) on every edge.

We test our implementations on the street network of Berlin, Germany
with 75 origin--destination pairs (OD-pairs), randomly chosen from
real-world OD-pairs. The OD-pairs as well as the network were provided by TomTom. For every OD-pair, we set $Q$
to the shortest route for a single agent and run all algorithms
for our psychological models and demands
\(d \in \set{100, 500, 1000, 1500, 2000, 2500, 3000}\).  One unit of
demand represents $7$--$20$ vehicles per hour.  The
imprecision is due to the fact that the exact penetration rate of
TomTom devices is unknown and that the map data is given with
respect to only TomTom users.

All experiments have been conducted on a machine with two Intel Xeon
Gold 5118 (12-core) CPUs with 64GiB of memory. The
multi-criteria shortest-path calculations of SAP-FC and 1D-SAP-FC
have been parallelized to 20 threads. 




\subparagraph*{Run Time.}


\begin{figure}[t]
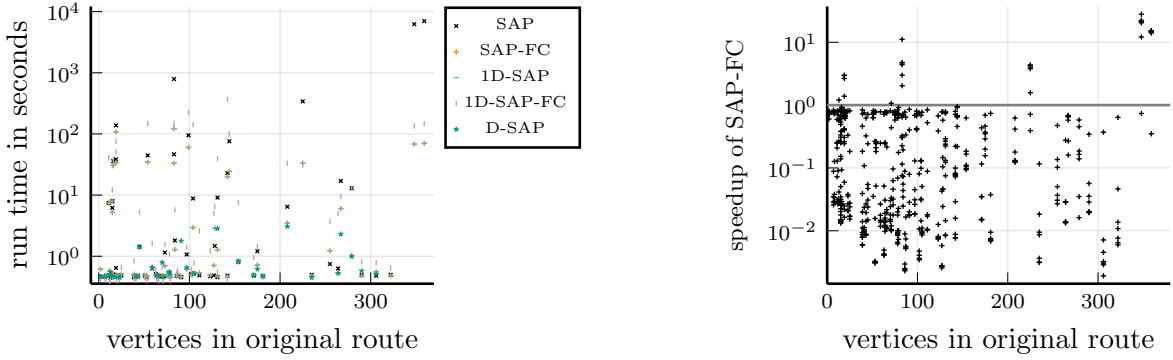

  \centering
  \input{content/plots/runtime_by_length_d2000}\hfill
  \input{content/plots/nn_dp_runtime_relation}
  \caption{Left: Absolute run times.  Each point represents one
    OD-pair for demand $d = 2000$.  For D-SAP, we excluded the
    OD-pairs that did not have a solution that was disjoint from the
    original route.  Right: Speedup of SAP-FC over SAP, with one point
    for each OD-pair and each value of $d$.}
  \label{fig:dp_comparison}
\end{figure}
\tikzexternaldisable

\Cref{fig:dp_comparison} shows the run times of our algorithms,
depending on the length of the original route.  
The main takeaways from \Cref{fig:dp_comparison} (left) are that
requiring disjoint routes makes the problem easier
and that the algorithms requiring fewer criteria but more
multi-criteria shortest path queries are faster for instances with
long original routes.  \Cref{fig:dp_comparison} (right) shows the
speedup of SAP-FC over SAP.  One can see that SAP is actually faster
than SAP-FC for most instances, sometimes up to two orders of
magnitude.  However, these are the instances with short original
route, which have low run times anyways.  On the other hand, SAP-FC is
up to one order of magnitude faster than SAP on some instances with
long original path.  We note that the multi-criteria shortest path
queries in SAP-FC can be parallelized, and we used 20 threads in our
experiments.  However, this parallelization cannot explain such high
speedups.  In \Cref{fig:dp_comparison} (left), one can see that SAP-FC
actually has rather consistent run times compared to SAP and never
exceeded 30 minutes.
Thus, our observations show that we can feasibly solve the problems
SAP and even more so 1D-SAP in the context of small distance queries,
e.g., in city networks, despite the worst-case exponential running
time.

\subparagraph*{Strategic Improvement.}

We assess how much strategic routing gains in terms of travel time
with respect to different disjointedness.  \Cref{fig:example-routes}
shows solutions for SAP, 1D-SAP and D-SAP routes.
%
The resulting travel times are shown in
\cref{fig:otScores}.  We see that the larger the number of agents, the
more we benefit from strategic routing.
\tikzexternalenable
\begin{figure}[t]
  \centering
  \begin{tikzpicture}[/tikz/background rectangle/.style={fill={rgb,1:red,1.0;green,1.0;blue,1.0}, draw opacity={1.0}}, show background rectangle]
\begin{axis}[title={}, title style={at={{(0.5,1)}}, font={{\fontsize{14 pt}{18.2
    pt}\selectfont}}, color={rgb,1:red,0.0;green,0.0;blue,0.0}, draw
opacity={1.0}, rotate={0.0}}, legend
style={color={rgb,1:red,0.0;green,0.0;blue,0.0}, draw opacity={1.0}, line
width={1}, solid, fill={rgb,1:red,1.0;green,1.0;blue,1.0}, fill opacity={1.0},
text opacity={1.0}, font={{\fontsize{8 pt}{10.4 pt}\selectfont}}}, axis
background/.style={fill={rgb,1:red,1.0;green,1.0;blue,1.0}, opacity={1.0}},
anchor={north west}, xshift={0.0mm}, yshift={-0.0mm}, height=11\baselineskip,
xlabel near ticks, ylabel near ticks, legend pos={north west}, scaled x
ticks={false}, xlabel={$d$}, x tick
style={color={rgb,1:red,0.0;green,0.0;blue,0.0}, opacity={1.0}}, x tick label
style={color={rgb,1:red,0.0;green,0.0;blue,0.0}, opacity={1.0}, rotate={0}},
xlabel style={, font={{\fontsize{11 pt}{14.3 pt}\selectfont}},
color={rgb,1:red,0.0;green,0.0;blue,0.0}, draw opacity={1.0}, rotate={0.0}},
xmajorgrids={true}, xmin={13.0}, xmax={3087.0},
xtick={{500.0,1000.0,1500.0,2000.0,2500.0,3000.0}},
xticklabels={{$500$,$1000$,$1500$,$2000$,$2500$,$3000$}}, xtick align={inside},
xticklabel style={font={{\fontsize{8 pt}{10.4 pt}\selectfont}},
color={rgb,1:red,0.0;green,0.0;blue,0.0}, draw opacity={1.0}, rotate={0.0}}, x
grid style={color={rgb,1:red,0.0;green,0.0;blue,0.0}, draw opacity={0.1}, line
width={0.5}, solid}, axis x line*={left}, x axis line
style={color={rgb,1:red,0.0;green,0.0;blue,0.0}, draw opacity={1.0}, line
width={1}, solid}, scaled y ticks={false}, ylabel={travel time/agent in s}, y
tick style={color={rgb,1:red,0.0;green,0.0;blue,0.0}, opacity={1.0}}, y tick
label style={color={rgb,1:red,0.0;green,0.0;blue,0.0}, opacity={1.0},
rotate={0}}, ylabel style={, font={{\fontsize{8.5 pt}{14.3 pt}\selectfont}}, color={rgb,1:red,0.0;green,0.0;blue,0.0}, draw opacity={1.0}, rotate={0.0}}, ymajorgrids={true}, ymin={-2426.2627228828833}, ymax={111784.68898114115}, ytick={{0.0,25000.0,50000.0,75000.0,100000.0}}, yticklabels={{$0$,$2.5\times10^{4}$,$5.0\times10^{4}$,$7.5\times10^{4}$,$1.0\times10^{5}$}}, ytick align={inside}, yticklabel style={font={{\fontsize{8 pt}{10.4 pt}\selectfont}}, color={rgb,1:red,0.0;green,0.0;blue,0.0}, draw opacity={1.0}, rotate={0.0}}, y grid style={color={rgb,1:red,0.0;green,0.0;blue,0.0}, draw opacity={0.1}, line width={0.5}, solid}, axis y line*={left}, y axis line style={color={rgb,1:red,0.0;green,0.0;blue,0.0}, draw opacity={1.0}, line width={1}, solid}, colorbar style={title={}, point meta max={nan}, point meta min={nan}}]
    \addplot[color=brightlavender, name path={28f1e422-3528-494a-b620-441d64370b2d}, draw opacity={1.0}, line width={1}, solid]
        coordinates {
            (100.0,878.8960333333337)
            (500.0,3753.3817333333336)
            (1000.0,12736.143866666667)
            (1500.0,27707.421066666666)
            (2000.0,48667.21533333333)
            (2500.0,75615.50453333334)
            (3000.0,108552.30355555557)
        }
        ;
    \addlegendentry {$1$-SP}
    \addplot[color=orange, name path={83030a37-eea9-4492-9ece-47067b6dca57}, draw opacity={1.0}, line width={1}, solid]
        coordinates {
            (100.0,862.3952733333334)
            (500.0,2776.7481066666664)
            (1000.0,8516.972)
            (1500.0,17993.124533333335)
            (2000.0,31240.899266666667)
            (2500.0,48266.2384)
            (3000.0,69067.47555555555)
        }
        ;
    \addlegendentry {$d$-SP}
    \addplot[color={rgb,1:red,0.2422;green,0.6433;blue,0.3044}, name path={351f996c-119a-4f27-9c2a-ef48633c484a}, draw opacity={1.0}, line width={1}, solid]
        coordinates {
            (100.0,875.3271200000001)
            (500.0,2823.2123733333337)
            (1000.0,8716.448133333333)
            (1500.0,18518.213066666667)
            (2000.0,32228.64633333333)
            (2500.0,49850.1872)
            (3000.0,71387.264)
        }
        ;
    \addlegendentry {D-SAP}
    \addplot[color=cyan, name path={5e5fb48f-4cfd-4c66-a668-a5b114ebebf3}, draw opacity={1.0}, line width={1}, solid]
        coordinates {
            (100.0,806.1227027027027)
            (500.0,1836.6606756756757)
            (1000.0,4841.925108108108)
            (1500.0,9787.042702702702)
            (2000.0,16682.95047297297)
            (2500.0,25522.680972972976)
            (3000.0,36315.517567567564)
        }
        ;
    \addlegendentry {SAP}
    \addplot[color={rgb,1:red,0.6755;green,0.5557;blue,0.0942}, name path={884b86e0-e762-4bb1-a42b-c643adb38ee8}, draw opacity={1.0}, line width={1}, solid]
        coordinates {
            (100.0,844.8603999999999)
            (500.0,1956.74928)
            (1000.0,5140.114373333333)
            (1500.0,10355.042666666666)
            (2000.0,17613.557133333332)
            (2500.0,26915.58176)
            (3000.0,38274.36488888889)
        }
        ;
    \addlegendentry {1D-SAP}
\end{axis}
\end{tikzpicture}\hfill
  \begin{tikzpicture}[/tikz/background rectangle/.style={fill={rgb,1:red,1.0;green,1.0;blue,1.0}, draw opacity={1.0}}, show background rectangle]
\begin{axis}[title={}, title style={at={{(0.5,1)}}, font={{\fontsize{14 pt}{18.2
    pt}\selectfont}}, color={rgb,1:red,0.0;green,0.0;blue,0.0}, draw
opacity={1.0}, rotate={0.0}}, legend
style={color={rgb,1:red,0.0;green,0.0;blue,0.0}, draw opacity={1.0}, line
width={1}, solid, fill={rgb,1:red,1.0;green,1.0;blue,1.0}, fill opacity={1.0},
text opacity={1.0}, font={{\fontsize{8 pt}{10.4 pt}\selectfont}}}, axis
background/.style={fill={rgb,1:red,1.0;green,1.0;blue,1.0}, opacity={1.0}},
anchor={north west}, xshift={0.0mm}, yshift={-0.0mm}, height=11\baselineskip,
xlabel near ticks, ylabel near ticks, legend pos={outer north east}, scaled x
ticks={false}, xlabel={$d$}, x tick
style={color={rgb,1:red,0.0;green,0.0;blue,0.0}, opacity={1.0}}, x tick label
style={color={rgb,1:red,0.0;green,0.0;blue,0.0}, opacity={1.0}, rotate={0}},
xlabel style={, font={{\fontsize{11 pt}{14.3 pt}\selectfont}},
color={rgb,1:red,0.0;green,0.0;blue,0.0}, draw opacity={1.0}, rotate={0.0}},
xmajorgrids={true}, xmin={13.0}, xmax={3087.0},
xtick={{500.0,1000.0,1500.0,2000.0,2500.0,3000.0}},
xticklabels={{$500$,$1000$,$1500$,$2000$,$2500$,$3000$}}, xtick align={inside},
xticklabel style={font={{\fontsize{8 pt}{10.4 pt}\selectfont}},
color={rgb,1:red,0.0;green,0.0;blue,0.0}, draw opacity={1.0}, rotate={0.0}}, x
grid style={color={rgb,1:red,0.0;green,0.0;blue,0.0}, draw opacity={0.1}, line
width={0.5}, solid}, axis x line*={left}, x axis line
style={color={rgb,1:red,0.0;green,0.0;blue,0.0}, draw opacity={1.0}, line
width={1}, solid}, scaled y ticks={false}, ylabel={travel time relative to
$d$-SP}, y tick style={color={rgb,1:red,0.0;green,0.0;blue,0.0}, opacity={1.0}},
y tick label style={color={rgb,1:red,0.0;green,0.0;blue,0.0}, opacity={1.0},
rotate={0}}, ylabel style={, font={{\fontsize{8.5 pt}{14.3 pt}\selectfont}}, color={rgb,1:red,0.0;green,0.0;blue,0.0}, draw opacity={1.0}, rotate={0.0}}, ymajorgrids={true}, ymin={0.5070159544586358}, ymax={1.6026946027535611}, ytick={{0.75,1.0,1.25,1.5}}, yticklabels={{$0.75$,$1.00$,$1.25$,$1.50$}}, ytick align={inside}, yticklabel style={font={{\fontsize{8 pt}{10.4 pt}\selectfont}}, color={rgb,1:red,0.0;green,0.0;blue,0.0}, draw opacity={1.0}, rotate={0.0}}, y grid style={color={rgb,1:red,0.0;green,0.0;blue,0.0}, draw opacity={0.1}, line width={0.5}, solid}, axis y line*={left}, y axis line style={color={rgb,1:red,0.0;green,0.0;blue,0.0}, draw opacity={1.0}, line width={1}, solid}, colorbar style={title={}, point meta max={nan}, point meta min={nan}}]
    \addplot[color=brightlavender, name path={aa54f112-0423-4b75-81f5-3b423a7df513}, draw opacity={1.0}, line width={1}, solid]
        coordinates {
            (100.0,1.019133639191019)
            (500.0,1.3517184811693497)
            (1000.0,1.4953840245883945)
            (1500.0,1.5398893624804861)
            (2000.0,1.557804559910993)
            (2500.0,1.5666334696870294)
            (3000.0,1.5716848296886103)
        }
        ;
    \addplot[color=orange, name path={aa30ec1a-36d0-4b1e-8ddb-7073620b8a48}, draw opacity={1.0}, line width={1}, solid]
        coordinates {
            (100.0,1.0)
            (500.0,1.0)
            (1000.0,1.0)
            (1500.0,1.0)
            (2000.0,1.0)
            (2500.0,1.0)
            (3000.0,1.0)
        }
        ;
    \addplot[color={rgb,1:red,0.2422;green,0.6433;blue,0.3044}, name path={d35f3cc8-19a4-4fd1-bc2f-e74fe0309a4c}, draw opacity={1.0}, line width={1}, solid]
        coordinates {
            (100.0,1.0149952661691692)
            (500.0,1.016733338740778)
            (1000.0,1.023421015512712)
            (1500.0,1.0291827321241829)
            (2000.0,1.031617113778814)
            (2500.0,1.0328169099666156)
            (3000.0,1.0335872771631633)
        }
        ;
    \addplot[color=cyan, name path={b5a3764b-81f4-450e-b54a-5d37cfa127a4}, draw opacity={1.0}, line width={1}, solid]
        coordinates {
            (100.0,0.9741171653517537)
            (500.0,0.6809253201077545)
            (1000.0,0.5828885369565554)
            (1500.0,0.5571150100085)
            (2000.0,0.5466818902106924)
            (2500.0,0.5411748769071669)
            (3000.0,0.5380257275235865)
        }
        ;
    \addplot[color={rgb,1:red,0.6755;green,0.5557;blue,0.0942}, name path={8aaa8809-85b3-435d-8cf3-5be4e81b5412}, draw opacity={1.0}, line width={1}, solid]
        coordinates {
            (100.0,0.9796672432287834)
            (500.0,0.7046909567713616)
            (1000.0,0.6035142974913306)
            (1500.0,0.5754999720856337)
            (2000.0,0.5637980194803995)
            (2500.0,0.55764821648086)
            (3000.0,0.5541590246497758)
        }
        ;
\end{axis}
\end{tikzpicture}
  \caption{The plots show the travel time per agent depending on the
    demand $d$, where each data point is averaged over all
    OD-pairs.  Absolute values are shown on the left, relative values
    with respect to the $d$-SP solution are shown on the right.}
  \label{fig:otScores}
\end{figure}
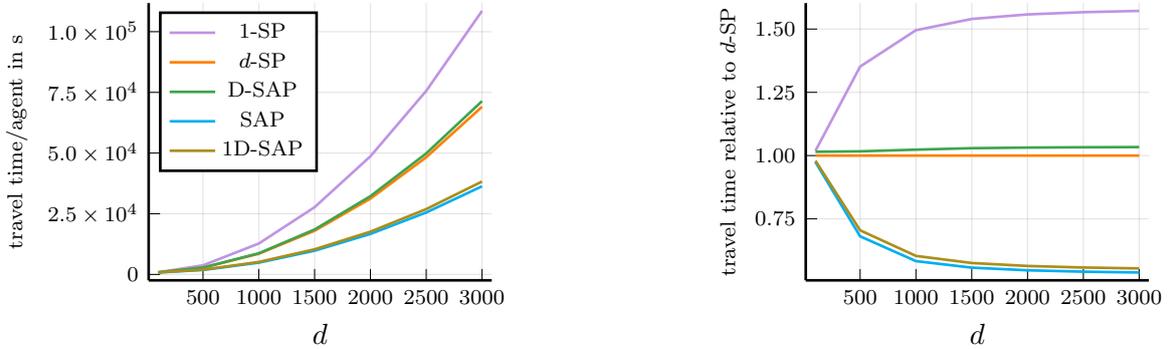
The plots show that, in direct comparison to the shortest path
assuming \(d\) agents per edge (\(d\)-SP), the SAP algorithms yield
results of about \(50\,\%\) reduced travel time for growing values of
\(d\).  Constraining the alternative route to be 1-disjoint from the
original only has a slight disadvantage (on average 1D-SAP is worse by
$2.2\,\%$).  Thus, taking into account that 1D-SAP can be solved
faster, solving 1D-SAP might give a good trade-off between run time
and quality of the solution.
Demanding full disjointedness leads to much worse travel times, as in
62.3\,\% of our test cases, no fully disjoint alternative exists, due
to the graph structure.  In this case, we assume that all agents use
the original route.  Restricted to the instances that allow for a
fully disjoint solution, the solution to D-SAP on average leads to a
$11.4\,\%$ higher travel time per agent compared to 1D-SAP.

\subparagraph*{Psychological Models.}\label{subpar:psychmods}

\begin{figure}[t]
  \centering
  \includegraphics[width=0.5\linewidth]{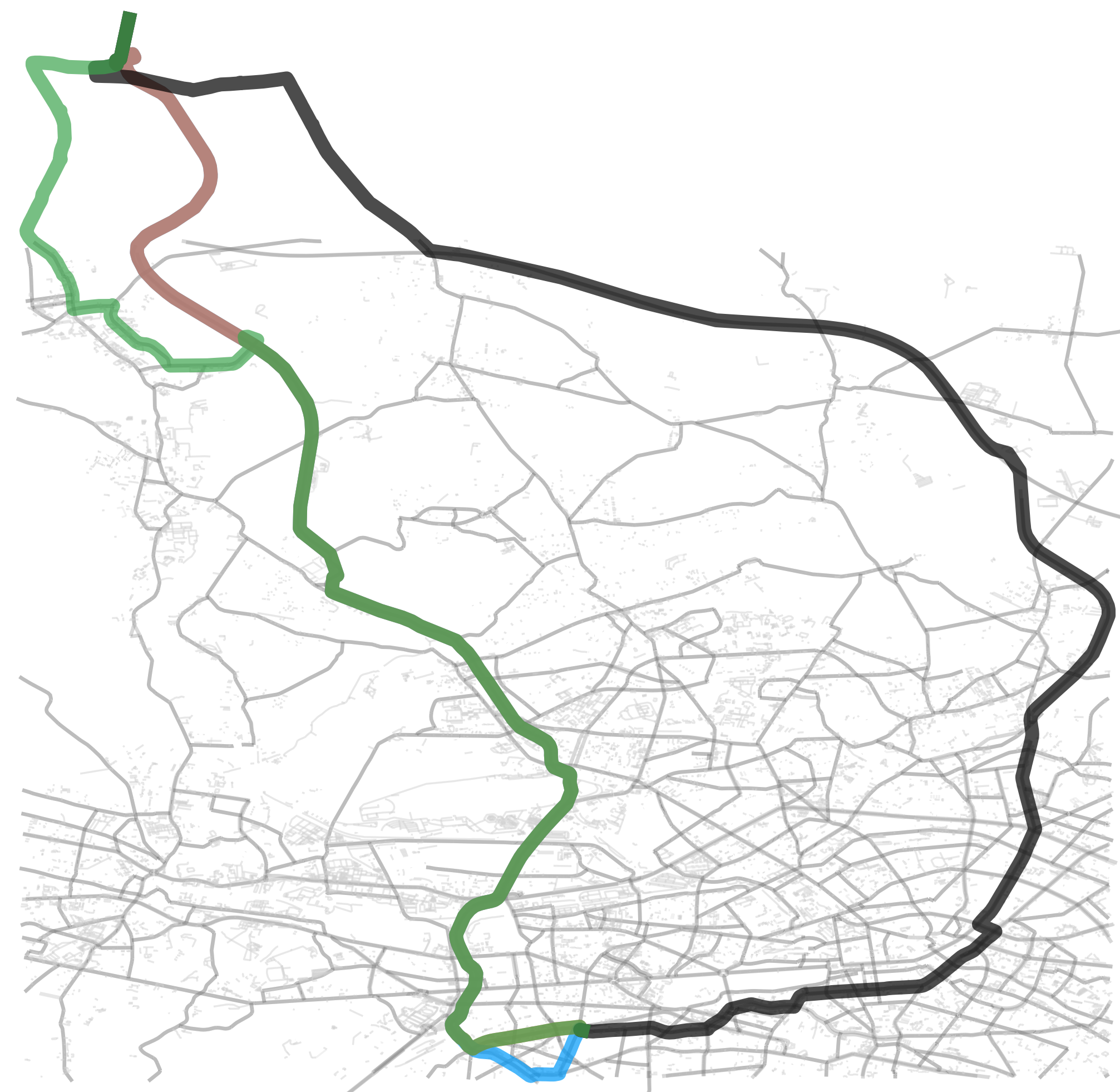}
  \caption{Visualization of example routes:  Original route (blue),
    and optimal alternative routes for the SAP problem with respect to
    System Optimum (black), User Equilibrium (green) and Linear Model
    (orange).}\label{fig:example-routes-psychmod}
\end{figure}

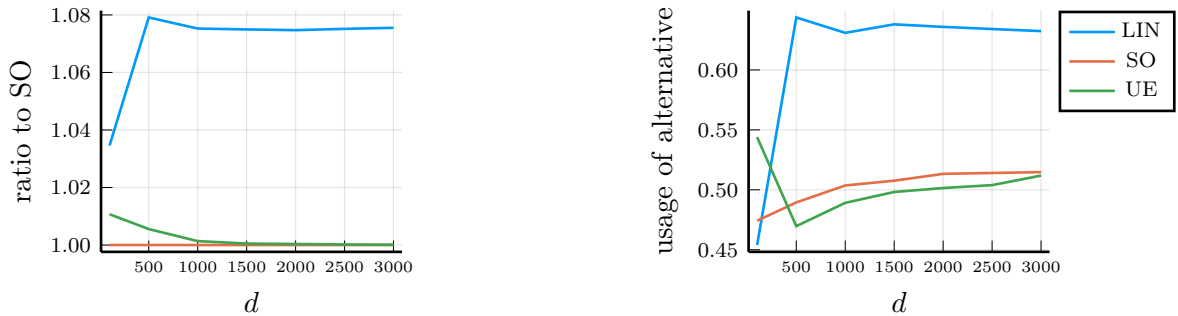
\begin{figure}[t]
  \centering
  \begin{tikzpicture}[/tikz/background rectangle/.style={fill={rgb,1:red,1.0;green,1.0;blue,1.0}, draw opacity={1.0}}, show background rectangle]
\begin{axis}[title={}, title style={at={{(0.5,1)}}, font={{\fontsize{14 pt}{18.2
    pt}\selectfont}}, color={rgb,1:red,0.0;green,0.0;blue,0.0}, draw
opacity={1.0}, rotate={0.0}}, legend
style={color={rgb,1:red,0.0;green,0.0;blue,0.0}, draw opacity={1.0}, line
width={1}, solid, fill={rgb,1:red,1.0;green,1.0;blue,1.0}, fill opacity={1.0},
text opacity={1.0}, font={{\fontsize{8 pt}{10.4 pt}\selectfont}}}, axis
background/.style={fill={rgb,1:red,1.0;green,1.0;blue,1.0}, opacity={1.0}},
anchor={north west}, xshift={0.0mm}, yshift={-0.0mm}, height=10\baselineskip,
xlabel near ticks, ylabel near ticks, legend pos={outer north east}, scaled x
ticks={false}, xlabel={$d$}, x tick
style={color={rgb,1:red,0.0;green,0.0;blue,0.0}, opacity={1.0}}, x tick label
style={color={rgb,1:red,0.0;green,0.0;blue,0.0}, opacity={1.0}, rotate={0}},
xlabel style={, font={{\fontsize{11 pt}{14.3 pt}\selectfont}},
color={rgb,1:red,0.0;green,0.0;blue,0.0}, draw opacity={1.0}, rotate={0.0}},
xmajorgrids={true}, xmin={13.0}, xmax={3087.0},
xtick={{500.0,1000.0,1500.0,2000.0,2500.0,3000.0}},
xticklabels={{$500$,$1000$,$1500$,$2000$,$2500$,$3000$}}, xtick align={inside},
xticklabel style={font={{\fontsize{6 pt}{10.4 pt}\selectfont}},
color={rgb,1:red,0.0;green,0.0;blue,0.0}, draw opacity={1.0}, rotate={0.0}}, x
grid style={color={rgb,1:red,0.0;green,0.0;blue,0.0}, draw opacity={0.1}, line
width={0.5}, solid}, axis x line*={left}, x axis line
style={color={rgb,1:red,0.0;green,0.0;blue,0.0}, draw opacity={1.0}, line
width={1}, solid}, scaled y ticks={false}, ylabel={ratio to SO}, y tick style={color={rgb,1:red,0.0;green,0.0;blue,0.0}, opacity={1.0}}, y tick label style={color={rgb,1:red,0.0;green,0.0;blue,0.0}, opacity={1.0}, rotate={0}}, ylabel style={, font={{\fontsize{11 pt}{14.3 pt}\selectfont}}, color={rgb,1:red,0.0;green,0.0;blue,0.0}, draw opacity={1.0}, rotate={0.0}}, ymajorgrids={true}, ymin={0.9976263028115808}, ymax={1.0814969368023903}, ytick={{1.0,1.02,1.04,1.06,1.08}}, yticklabels={{$1.00$,$1.02$,$1.04$,$1.06$,$1.08$}}, ytick align={inside}, yticklabel style={font={{\fontsize{8 pt}{10.4 pt}\selectfont}}, color={rgb,1:red,0.0;green,0.0;blue,0.0}, draw opacity={1.0}, rotate={0.0}}, y grid style={color={rgb,1:red,0.0;green,0.0;blue,0.0}, draw opacity={0.1}, line width={0.5}, solid}, axis y line*={left}, y axis line style={color={rgb,1:red,0.0;green,0.0;blue,0.0}, draw opacity={1.0}, line width={1}, solid}, colorbar style={title={}, point meta max={nan}, point meta min={nan}}]
    \addplot[color={rgb,1:red,0.0;green,0.6056;blue,0.9787}, name path={6cd61531-376b-4e8a-aa55-8f6c937abe62}, draw opacity={1.0}, line width={1}, solid]
        coordinates {
            (100.0,1.034628339111745)
            (500.0,1.0791232396139712)
            (1000.0,1.0752675780646768)
            (1500.0,1.0749623317909929)
            (2000.0,1.074708324413654)
            (2500.0,1.0751547142502842)
            (3000.0,1.0755224379975439)
        }
        ;
    \addplot[color={rgb,1:red,0.8889;green,0.4356;blue,0.2781}, name path={5bb99821-d3ea-4b31-b934-1a69c722cdd3}, draw opacity={1.0}, line width={1}, solid]
        coordinates {
            (100.0,1.0)
            (500.0,1.0)
            (1000.0,1.0)
            (1500.0,1.0)
            (2000.0,1.0)
            (2500.0,1.0)
            (3000.0,1.0)
        }
        ;
    \addplot[color={rgb,1:red,0.2422;green,0.6433;blue,0.3044}, name path={02c1b24e-ab20-466c-b3b0-6678a4c56a9b}, draw opacity={1.0}, line width={1}, solid]
        coordinates {
            (100.0,1.0106993472988464)
            (500.0,1.0055528362037722)
            (1000.0,1.0013634895144872)
            (1500.0,1.0005167476273686)
            (2000.0,1.0003590315936612)
            (2500.0,1.0001968874534604)
            (3000.0,1.0001012176948074)
        }
        ;
\end{axis}
\end{tikzpicture}\hfill
  \begin{tikzpicture}[/tikz/background rectangle/.style={fill={rgb,1:red,1.0;green,1.0;blue,1.0}, draw opacity={1.0}}, show background rectangle]
\begin{axis}[title={}, title style={at={{(0.5,1)}}, font={{\fontsize{14 pt}{18.2
    pt}\selectfont}}, color={rgb,1:red,0.0;green,0.0;blue,0.0}, draw
opacity={1.0}, rotate={0.0}}, legend
style={color={rgb,1:red,0.0;green,0.0;blue,0.0}, draw opacity={1.0}, line
width={1}, solid, fill={rgb,1:red,1.0;green,1.0;blue,1.0}, fill opacity={1.0},
text opacity={1.0}, font={{\fontsize{8 pt}{10.4 pt}\selectfont}}}, axis
background/.style={fill={rgb,1:red,1.0;green,1.0;blue,1.0}, opacity={1.0}},
anchor={north west}, xshift={0.0mm}, yshift={-0.0mm}, height=10\baselineskip,
xlabel near ticks, ylabel near ticks, legend pos={outer north east}, scaled x ticks={false}, xlabel={$d$}, x tick style={color={rgb,1:red,0.0;green,0.0;blue,0.0}, opacity={1.0}}, x tick label style={color={rgb,1:red,0.0;green,0.0;blue,0.0}, opacity={1.0}, rotate={0}}, xlabel style={, font={{\fontsize{11 pt}{14.3 pt}\selectfont}}, color={rgb,1:red,0.0;green,0.0;blue,0.0}, draw opacity={1.0}, rotate={0.0}}, xmajorgrids={true}, xmin={13.0}, xmax={3087.0}, xtick={{500.0,1000.0,1500.0,2000.0,2500.0,3000.0}}, xticklabels={{$500$,$1000$,$1500$,$2000$,$2500$,$3000$}}, xtick align={inside}, xticklabel style={font={{\fontsize{6 pt}{10.4 pt}\selectfont}}, color={rgb,1:red,0.0;green,0.0;blue,0.0}, draw opacity={1.0}, rotate={0.0}}, x grid style={color={rgb,1:red,0.0;green,0.0;blue,0.0}, draw opacity={0.1}, line width={0.5}, solid}, axis x line*={left}, x axis line style={color={rgb,1:red,0.0;green,0.0;blue,0.0}, draw opacity={1.0}, line width={1}, solid}, scaled y ticks={false}, ylabel={usage of alternative}, y tick style={color={rgb,1:red,0.0;green,0.0;blue,0.0}, opacity={1.0}}, y tick label style={color={rgb,1:red,0.0;green,0.0;blue,0.0}, opacity={1.0}, rotate={0}}, ylabel style={, font={{\fontsize{11 pt}{14.3 pt}\selectfont}}, color={rgb,1:red,0.0;green,0.0;blue,0.0}, draw opacity={1.0}, rotate={0.0}}, ymajorgrids={true}, ymin={0.4482229729729729}, ymax={0.6494797297297297}, ytick={{0.45,0.5,0.55,0.6}}, yticklabels={{$0.45$,$0.50$,$0.55$,$0.60$}}, ytick align={inside}, yticklabel style={font={{\fontsize{8 pt}{10.4 pt}\selectfont}}, color={rgb,1:red,0.0;green,0.0;blue,0.0}, draw opacity={1.0}, rotate={0.0}}, y grid style={color={rgb,1:red,0.0;green,0.0;blue,0.0}, draw opacity={0.1}, line width={0.5}, solid}, axis y line*={left}, y axis line style={color={rgb,1:red,0.0;green,0.0;blue,0.0}, draw opacity={1.0}, line width={1}, solid}, colorbar style={title={}, point meta max={nan}, point meta min={nan}}]
    \addplot[color={rgb,1:red,0.0;green,0.6056;blue,0.9787}, name path={d09eca64-b8ea-4d94-b73d-eb0ccb0e1de0}, draw opacity={1.0}, line width={1}, solid]
        coordinates {
            (100.0,0.4539189189189189)
            (500.0,0.6437837837837838)
            (1000.0,0.6309589041095891)
            (1500.0,0.6380469483568076)
            (2000.0,0.6359225352112675)
            (2500.0,0.6341464788732394)
            (3000.0,0.6324037558685446)
        }
        ;
    \addlegendentry {LIN}
    \addplot[color={rgb,1:red,0.8889;green,0.4356;blue,0.2781}, name path={6a4bccfb-fbc8-4e2c-8871-645997ea89be}, draw opacity={1.0}, line width={1}, solid]
        coordinates {
            (100.0,0.4741891891891891)
            (500.0,0.4894722222222222)
            (1000.0,0.5036164383561642)
            (1500.0,0.5076073059360731)
            (2000.0,0.5132986111111109)
            (2500.0,0.514038888888889)
            (3000.0,0.5147546296296296)
        }
        ;
    \addlegendentry {SO}
    \addplot[color={rgb,1:red,0.2422;green,0.6433;blue,0.3044}, name path={876c5c3f-8b79-4179-bdfe-c13c37fe4f79}, draw opacity={1.0}, line width={1}, solid]
        coordinates {
            (100.0,0.543918918918919)
            (500.0,0.4696712328767124)
            (1000.0,0.48912162162162154)
            (1500.0,0.49817117117117093)
            (2000.0,0.5014797297297299)
            (2500.0,0.5039027027027029)
            (3000.0,0.5118564814814817)
        }
        ;
    \addlegendentry {UE}
\end{axis}
\end{tikzpicture}
  \caption{Left: Overall travel time of the User Equilibrium and the Linear
    Model, relative to the System Optimum, depending on the demand
    $d$.  Each data point is averaged over all OD-pairs.  Right: The
    fraction of flow $x_P/d$ on the alternative route depending on the
    $d$ and the psychological model; again averaged over all
    OD-pairs.}\label{fig:ue-so-usage}
\end{figure}

The gain of splitting traffic also depends on the psychological model.
We note that for most OD-pairs, the different psychological models
lead to the same route and only differ in the amount of traffic using
the alternative.  However, there are examples where we actually get
different routes, see, e.g., \Cref{fig:example-routes-psychmod}.  In
this particular instance it is interesting to see that the System
Optimum would require a large detour for some drivers (around
$17\,\%$), which they probably would not accept without additional
incentive.

To compare the models we proposed, we examine the proportion of agents
using the suggested alternative route, and the resulting costs
compared to the overall travel time when using the System Optimum
in~\Cref{fig:ue-so-usage}. The Linear Model is configured with
\(c = 1\). One can see in~\Cref{fig:ue-so-usage} (left) that for
increasing \(d\), the User Equilibrium overall travel time approaches
that of the System Optimum, indicating that the optimal alternative
route of the User Equilibrium is not much worse than the System
Optimum. This is supported by the similar amount of agents that are
assigned by the models to the alternative route, as shown in
\Cref{fig:ue-so-usage} (right). In contrast, the Linear Model uses the
alternative route a lot more.  We note that this is to be expected, as
discussed in \Cref{sec:psychmods}.


\section{Conclusion}

Besides providing a framework for formalizing strategic routing
scenarios, we gave different algorithms solving SAP.
Both of these contributions open the door to future research.
Concerning SAP, we have seen that different psychological models can
lead to different alternative routes, and it would be interesting to
study how people actually behave depending on the exact formulation of
the suggestion and on potential additional incentives to take a longer
route.
It is promising to study models that lie in-between the User
Equilibrium and the Linear Model.  By setting, e.g.,
$c(x) = \tanh(a \cdot x / d)$ in the Quotient Model
(\Cref{eq:general-model-def}), we obtain a model that behaves like the
Linear Model for small $x$ and approaches the User Equilibrium Model
for larger $x$, where the constant $a$ controls how quickly that
happens.  We note that this choice of $c(x)$ satisfies the conditions
of~\Cref{thm:quotient-model-pareto-conform}, implying that the
resulting model is Pareto-conform, which makes the algorithms
from~\Cref{sec:sap-algos} applicable.
Concerning algorithmic performance, we have seen that our
proof-of-concept implementation yields reasonable run times.  Our
implementation uses techniques such as A* to speed up computation.
Beyond that, there is still potential for engineering, e.g., by
employing preprocessing techniques.
Beyond the SAP problem, our framework gives rise to various problems
in the context of strategic routing that are worth studying
algorithmically.

\FloatBarrier\
\bibliography{meta/ms}

\begin{thebibliography}{10}

\bibitem{Abraham2013}
Ittai Abraham, Daniel Delling, Andrew~V. Goldberg, and Renato~F. Werneck.
\newblock Alternative routes in road networks.
\newblock {\em Journal of Experimental Algorithmics}, 18, 2013.
\newblock \href {http://dx.doi.org/10.1145/2444016.2444019}
  {\path{doi:10.1145/2444016.2444019}}.

\bibitem{bast_route_2016}
Hannah Bast, Daniel Delling, Andrew~V. Goldberg, Matthias Müller-Hannemann,
  Thomas Pajor, Peter Sanders, Dorothea Wagner, and Renato~F. Werneck.
\newblock Route planning in transportation networks.
\newblock {\em Algorithm Engineering}, 2016.
\newblock \href {http://dx.doi.org/10.1007/978-3-319-49487-6_2}
  {\path{doi:10.1007/978-3-319-49487-6_2}}.

\bibitem{Stackelberg_Bhaskar}
Umang Bhaskar, Lisa Fleischer, and Elliot Anshelevich.
\newblock A {Stackelberg} strategy for routing flow over time.
\newblock {\em Games and Economic Behavior}, 92:232--247, 2015.

\bibitem{Stackelberg_Bonifaci}
Vincenzo Bonifaci, Tobias Harks, and Guido Sch{\"a}fer.
\newblock {Stackelberg} routing in arbitrary networks.
\newblock {\em Mathematics of Operations Research}, 35(2):330--346, 2010.

\bibitem{Delling2009_2}
Daniel Delling.
\newblock Time-dependent {SHARC}-routing.
\newblock {\em Algorithmica}, 60(1):60--94, 2009.
\newblock URL: \url{https://doi.org/10.1007/s00453-009-9341-0}, \href
  {http://dx.doi.org/10.1007/s00453-009-9341-0}
  {\path{doi:10.1007/s00453-009-9341-0}}.

\bibitem{Delling_Wagner_2009}
Daniel Delling and Dorothea Wagner.
\newblock Pareto paths with {SHARC}.
\newblock In {\em Experimental Algorithms}, Lecture Notes in Computer Science,
  page 125–136. Springer, 2009.
\newblock \href {http://dx.doi.org/10.1007/978-3-642-02011-7_13}
  {\path{doi:10.1007/978-3-642-02011-7_13}}.

\bibitem{Delling2009}
Daniel Delling and Dorothea Wagner.
\newblock Time-dependent route planning.
\newblock In {\em Robust and Online Large-Scale Optimization: Models and
  Techniques for Transportation Systems}, pages 207--230. Springer, 2009.
\newblock \href {http://dx.doi.org/10.1007/978-3-642-05465-5_8}
  {\path{doi:10.1007/978-3-642-05465-5_8}}.

\bibitem{Demiryurek2010}
Ugur Demiryurek, Farnoush Banaei-Kashani, and Cyrus Shahabi.
\newblock A case for time-dependent shortest path computation in spatial
  networks.
\newblock In {\em Proceedings of SIGSPATIAL 2010}, page 474–477. ACM, 2010.
\newblock \href {http://dx.doi.org/10.1145/1869790.1869865}
  {\path{doi:10.1145/1869790.1869865}}.

\bibitem{Garey1983-GARCAI-2}
Michael~R. Garey and David~S. Johnson.
\newblock {\em Computers and Intractability. {A} Guide to the Theory of
  {NP}-Completeness}.
\newblock W. H. Freeman, 1979.
\newblock \href {http://dx.doi.org/10.2307/2273574}
  {\path{doi:10.2307/2273574}}.

\bibitem{Hansen_1980}
Pierre Hansen.
\newblock Bicriterion path problems.
\newblock In {\em Multiple Criteria Decision Making Theory and Application},
  page 109–127. Springer, 1980.
\newblock \href {http://dx.doi.org/10.1007/978-3-642-48782-8_9}
  {\path{doi:10.1007/978-3-642-48782-8_9}}.

\bibitem{Inrix2020}
{INRIX, Inc.}
\newblock {INRIX Verkehrsstudie: Stau verursacht Kosten in Milliardenhöhe}.
\newblock {\em INRIX Press Releases}, 2020.
\newblock URL:
  \url{https://inrix.com/press-releases/2019\%2Dtraffic\%2Dscorecard\%2Dgerman/}.

\bibitem{Stackelberg_Karakostas}
George Karakostas and Stavros~G Kolliopoulos.
\newblock {Stackelberg} strategies for selfish routing in general
  multicommodity networks.
\newblock {\em Algorithmica}, 53(1):132--153, 2009.

\bibitem{Stackelberg_Korilis}
Yannis~A. Korilis, Aurel~A. Lazar, and Ariel Orda.
\newblock Achieving network optima using {Stackelberg} routing strategies.
\newblock {\em {IEEE/ACM} Transactions on Networking}, 5(1):161--173, 1997.
\newblock \href {http://dx.doi.org/10.1109/90.554730}
  {\path{doi:10.1109/90.554730}}.

\bibitem{kroller_driver_2020}
Alexander Kröller, Falk Hüffner, Lukasz Kosma, Katja Kröller, and Mattia
  Zeni.
\newblock Driver expectations towards strategic routing.
\newblock Unpublished Manuscript, 13 Pages.

\bibitem{Koehler_Moehring_Skutella_2009}
Ekkehard Köhler, Rolf~H. Möhring, and Martin Skutella.
\newblock Traffic networks and flows over time.
\newblock In {\em Algorithmics of Large and Complex Networks: Design, Analysis,
  and Simulation}, volume 5515 of {\em Lecture Notes in Computer Science}, page
  166–196. Springer, 2009.
\newblock \href {http://dx.doi.org/10.1007/978-3-642-02094-0_9}
  {\path{doi:10.1007/978-3-642-02094-0_9}}.

\bibitem{Mandow_DeLaCruz_2008}
Lawrence Mandow and José. Luis~Pérez De~La~Cruz.
\newblock Multiobjective {A*} search with consistent heuristics.
\newblock {\em Journal of the ACM}, 57(5):27:1–27:25, 2008.
\newblock \href {http://dx.doi.org/10.1145/1754399.1754400}
  {\path{doi:10.1145/1754399.1754400}}.

\bibitem{Martins_1984}
Ernesto Queirós~Vieira Martins.
\newblock On a multicriteria shortest path problem.
\newblock {\em European Journal of Operational Research}, 16(2):236–245,
  1984.
\newblock \href {http://dx.doi.org/10.1016/0377-2217(84)90077-8}
  {\path{doi:10.1016/0377-2217(84)90077-8}}.

\bibitem{ParetoFeasible}
Matthias Müller-Hannemann and Karsten Weihe.
\newblock Pareto shortest paths is often feasible in practice.
\newblock In {\em Algorithm Engineering}, page 185–197. Springer, 2001.

\bibitem{Nannicini2011}
Giacomo Nannicini, Daniel Delling, Dominik Schultes, and Leo Liberti.
\newblock Bidirectional {A*} search on time-dependent road networks.
\newblock {\em Networks}, 59(2):240--251, 2011.
\newblock \href {http://dx.doi.org/10.1002/net.20438}
  {\path{doi:10.1002/net.20438}}.

\bibitem{us1964}
US~Bureau of~Public Roads. Office~of Planning. Urban Planning~Division.
\newblock {\em Traffic Assignment Manual for Application with a Large, High
  Speed Computer}.
\newblock US Department of Commerce, 1964.

\bibitem{Paraskevopoulos2013}
Andreas Paraskevopoulos and Christos~D. Zaroliagis.
\newblock Improved alternative route planning.
\newblock In {\em Proceedings of ATMOS 2013}, pages 108--122. Schloss Dagstuhl,
  2013.
\newblock \href {http://dx.doi.org/10.4230/OASIcs.ATMOS.2013.108}
  {\path{doi:10.4230/OASIcs.ATMOS.2013.108}}.

\bibitem{roughgarden_2005}
Tim Roughgarden.
\newblock {\em Selfish routing and the price of anarchy}.
\newblock MIT Press, 2005.

\bibitem{Roughgarden_Tardos_2002}
Tim Roughgarden and Éva Tardos.
\newblock How bad is selfish routing?
\newblock {\em Journal of the ACM}, 49(2):236–259, Mar 2002.
\newblock \href {http://dx.doi.org/10.1145/506147.506153}
  {\path{doi:10.1145/506147.506153}}.

\bibitem{Sering18}
Leon Sering and Martin Skutella.
\newblock Multi-source multi-sink {Nash} flows over time.
\newblock In {\em Proceedings of {ATMOS} 2018}, pages 12:1--12:20, 2018.
\newblock \href {http://dx.doi.org/10.4230/OASIcs.ATMOS.2018.12}
  {\path{doi:10.4230/OASIcs.ATMOS.2018.12}}.

\bibitem{Socrates2Project}
Socrates2.0.
\newblock Amsterdam pilot site launched with improved navigation service.
\newblock Website, Dec 2019.
\newblock
  \url{https://socrates2.org/news-agenda/amsterdam-pilot-launched-improved-navigation-service-testers-sought}.

\bibitem{Strasser17}
Ben Strasser.
\newblock Dynamic time-dependent routing in road networks through sampling.
\newblock In {\em Proceedings of {ATMOS} 2017}, pages 3:1--3:17. Schloss
  Dagstuhl, 2017.
\newblock \href {http://dx.doi.org/10.4230/OASIcs.ATMOS.2017.3}
  {\path{doi:10.4230/OASIcs.ATMOS.2017.3}}.

\bibitem{vanEssen2018}
{Mariska Alice} {van Essen}.
\newblock {\em The potential of social routing advice}.
\newblock PhD thesis, University of Twente, 2018.
\newblock \href {http://dx.doi.org/10.3990/1.9789055842377}
  {\path{doi:10.3990/1.9789055842377}}.

\bibitem{Wardrop1952}
John~Glen Wardrop.
\newblock Some theoretical aspects of road traffic research.
\newblock {\em Proceedings of the Institution of Civil Engineers},
  1(3):325--362, 1952.
\newblock \href {http://dx.doi.org/10.1680/ipeds.1952.11259}
  {\path{doi:10.1680/ipeds.1952.11259}}.

\bibitem{Yukish_2004}
Michael~A. Yukish.
\newblock {\em Algorithms to Identify Pareto Points in Multi-Dimensional Data
  Sets}.
\newblock PhD thesis, Mechanical Engineering Dept., The Pennsylvania State
  University, State College, 2004.

\bibitem{zhu_levinson_2015}
Shanjiang Zhu and David Levinson.
\newblock Do people use the shortest path? {An} empirical test of {Wardrop's}
  first principle.
\newblock {\em PLOS ONE}, 10(8):1--18, 2015.
\newblock \href {http://dx.doi.org/10.1371/journal.pone.0134322}
  {\path{doi:10.1371/journal.pone.0134322}}.

\end{thebibliography}
\end{document}